\pgfplotsset{compat=newest}
\definecolor{tudcyan}{RGB}{0,166,214}
\definecolor{tudmagenta}{RGB}{109,23,127}
\definecolor{tudpurple}{RGB}{29,28,115}
\definecolor{tudgraygreen}{RGB}{107,134,137}
\colorlet{lighttudcyan}{tudcyan!20}
\colorlet{lighttudmagenta}{tudmagenta!20}
\newlength{\hatchspread}
\newlength{\hatchthickness}
\newlength{\hatchshift}
\newcommand{\hatchcolor}{}
\tikzset{hatchspread/.code={\setlength{\hatchspread}{#1}},
	hatchthickness/.code={\setlength{\hatchthickness}{#1}},
	hatchshift/.code={\setlength{\hatchshift}{#1}},
	hatchcolor/.code={\renewcommand{\hatchcolor}{#1}}}
\tikzset{hatchspread=7pt,
	hatchthickness=0.5pt,
	hatchshift=0pt,
	hatchcolor=black}
\def\centerarc[#1](#2)(#3:#4:#5)
\newcommand{\rev}[1]{{#1}}
\newtheorem{prob}{Problem}
\newtheorem{assum}{Assumption}
\newtheorem{defn}{Definition}
\newtheorem{rem}{Remark}
\newtheorem{prop}{Proposition}
\newtheorem{lem}{Lemma}
\newtheorem{thm}{Theorem}
\def\proofname{Proof}
\def\QED{\mbox{\rule[0pt]{1.3ex}{1.3ex}}} 
\newif\if@QEDshow  \@QEDshowtrue
\def\proofindentspace{2\parindent}
\def\proof{\@ifnextchar[{\@proof}{\@proof[\proofname]}}
\def\@proof[#1]{\@QEDshowtrue\par\noindent\hspace{\proofindentspace}{\itshape #1: }}
\def\QEDhereeqn{\global\@QEDshowfalse\eqno\let\eqno\relax\let\leqno\relax
	\let\veqno\relax\hbox{\QED}}
\def\QEDhere{\global\@QEDshowfalse\QED}
\def\QEDoff{\global\@QEDshowfalse}
\newcommand*{\tran}{^{\mkern-1.5mu\mathsf{T}}\!}  
\def\d{\ensuremath{\mathrm{d}}}
\DeclareMathOperator{\Post}{Post}
\DeclarePairedDelimiter{\ceil}{\lceil}{\rceil}
\DeclareMathOperator{\bigO}{\mathcal{O}}
\def\norm[#1]{\left|#1\right|}
\def\shortnorm[#1]{|#1|}
\def\bisim{\ensuremath{\cong}}
\def\plant{{\mathrm{p}}}
\def\periodic{{\mathrm{P}}}
\def\etc{{\mathrm{PETC}}}
\def\Bisim{{\mathrm{B}}}
\def\Sim{{\mathrm{S}}}
\def\MPETC{{\mathrm{E}}}
\def\PETC{{\mathrm{S'}}}
\def\dummy{}
\def\Xs{\mathcal{X}}
\def\Ys{\mathcal{Y}}
\def\Us{\mathcal{U}}
\def\No{\mathbb{N}_{0}}
\def\N{\mathbb{N}}
\def\R{\mathbb{R}}
\def\S{\mathbb{S}}
\def\Rs{\mathcal{R}}
\def\Ss{\mathcal{S}}
\def\Ks{\mathcal{K}}
\def\Qs{\mathcal{Q}}
\def\KN{\mathbb{K}_N}
\def\Es{\mathcal{E}}  
\def\Us{\mathcal{U}}  
\def\np{{n_\plant}}
\def\nup{{n_{\mathrm{u}}}}
\def\nx{{n_{\mathrm{x}}}}
\def\kmax{\bar{k}}
\def\kmin{\underline{k}}
\def\xv{\boldsymbol{x}}
\def\xiv{\boldsymbol{\xi}}
\def\zetav{\boldsymbol{\zeta}}
\def\Am{\boldsymbol{A}}
\def\Bm{\boldsymbol{B}}
\def\I{\mathbf{I}}
\def\O{\mathbf{0}}
\def\Mm{\boldsymbol{M}}
\def\Nm{\boldsymbol{N}}
\def\Pm{\boldsymbol{P}}
\def\Qm{\boldsymbol{Q}}
\def\Km{\boldsymbol{K}}
\def\Pl{\Pm_{\mathrm{Lyap}}}
\def\Ql{\Qm_{\mathrm{Lyap}}}
\def\e{\mathrm{e}}
\title{Towards Traffic Bisimulation of Linear Periodic Event-Triggered Controllers%
	\thanks{© 2020 IEEE.  Personal use of this material is permitted.  Permission from IEEE must be obtained for all other uses, in any current or future media, including reprinting/republishing this material for advertising or promotional purposes, creating new collective works, for resale or redistribution to servers or lists, or reuse of any copyrighted component of this work in other works.}}
\author{Gabriel de A.~Gleizer and Manuel Mazo Jr.
	\thanks{This work is supported by the European Research Council through the SENTIENT project (ERC-2017-STG \#755953).	G.~de A.~Gleizer and M.~Mazo Jr. are with the Delft Center for Systems and Control,
		Delft Technical University, 2628 CD Delft, The Netherlands
		{\tt\small \{g.gleizer, m.mazo\}@tudelft.nl}}%
}
\begin{document}
	
	\maketitle
	\begin{abstract}
		We provide a method to construct finite abstractions exactly bisimilar to linear systems under a modified periodic event-triggered control (PETC), when considering as output the inter-event times they generate. Assuming that the initial state lies on a known compact set, these finite-state models can exactly predict all sequences of sampling times until a specified Lyapunov sublevel set is reached.
		\rev{Based on these results,} we provide a way to build tight models simulating the traffic of conventional PETC. These models allow computing tight bounds of the PETC average frequency and global exponential stability (GES) decay rate.
		Our results are demonstrated through a numerical case study.
	\end{abstract}
		
	\section{Introduction}
	Since the beginning of the digital control era, periodic sampling has been the standard choice for practitioners on all sorts of applications, due to its simple implementation and the existence of many analysis and design results and tools. However, with the replacement of point-to-point communication channels with networked control systems (NCSs), wireless networks in particular, minimizing control data generation and transmission becomes crucial. Because of this and the fundamental limitations of periodic control, aperiodic methods such as event-triggered control (ETC) have been proposed \cite{tabuada2007event} and gained enormous traction since then.
	In ETC, instead of time determining when a sensor should send the data, this is dictated by a triggering condition, typically a function of current and held measurements. From the beginning, ETC has shown immense promise in reducing control communications, and subsequent work was dedicated, among other objectives, to further decrease the number of transmissions generated \cite{wang2008event, girard2015dynamic, heemels2012introduction}. While most event-triggered mechanisms assume that a smart sensory system continuously monitors the designed triggering condition, periodic event-triggered control (PETC, \cite{heemels2013periodic}) considers the case where such sensory system is also digital, and the condition checking is periodic. For this practical reason, this paper focuses on this class of ETC.
	
	Even though numerical simulations provided in much of the literature provide evidence that the promised traffic reduction is substantial \cite{wang2008event, heemels2012introduction, heemels2013periodic}, little is formally known about the traffic patterns generated by (P)ETC. Existing work can be divided into two categories: the first involves creating abstract models of the generated traffic, such as \cite{kolarijani2016formal, mazo2018abstracted, fu2018traffic, gleizer2020scalable}; the second involves inferring asymptotic properties of the trajectories that inter-event times describe \cite{postoyan2019interevent}. In the latter, planar linear systems using the triggering condition from \cite{tabuada2007event} are investigated, and the authors show that, under some conditions on the triggering function, inter-event times either converge to a constant value or to a periodic pattern. Our ultimate goal, however, is designing tools for scheduling, and thus we are more concerned with precise short-term predictions than longer-term tendencies. Consequently, the current piece of work \rev{belongs to} the first category. The work in \cite{kolarijani2016formal, mazo2018abstracted} is dedicated to approximate similar models \cite{tabuada2009verification} of (continuous) ETC traffic using the novel notion of power quotient systems: in this case, the abstraction's states represent regions of the state-space of the ETC system, and the output associated to a discrete state of the abstraction is a time interval, to which the actual inter-event time is guaranteed to belong. This modeling strategy is extended to PETC in \cite{fu2018traffic}. To obtain an exact simulation, \cite{gleizer2020scalable} explores the discrete nature of inter-event times of PETC, leading to a novel quotient system that predicts the exact inter-sample time from any given state. Additionally, the main advantage of the latter in comparison to its predecessors is scalability w.r.t.~state-space dimension. On the downside, the generated models exhibit severe non-determinism, likely due to the small number of states of the abstraction and the relaxations used when computing the transition relation. Therefore, even though its one-step ahead predictions are exact, after a couple of steps it loses its prediction capability is severely limited.
	
	The present work tackles precisely this longer-term predictability issue. Building upon the quotient model from \cite{gleizer2020scalable}, we develop new abstractions that enumerate all possible sequences of inter-event times until \rev{a Lyapunov sublevel set is reached. Based on this, we propose a modified PETC mixed with periodic control, hereafter denoted MPETC, that initiates with PETC sampling and switches to periodic sampling when the states lie inside the aforementioned sublevel set. The MPETC retains the practical benefits from PETC, while improving traffic predictability; for it, our abstraction constitutes a bisimulation.}
	The abstraction is computable because the number of possible sampling sequences generated during the PETC \rev{phase} is finite, and checking whether PETC generates a given sampling sequence is decidable: it is equivalent to the satisfiability of a conjunction of non-convex quadratic inequalities, to which solvers exist (e.g.,~\rev{the satisfiability-modulo-theories (SMT) solver }Z3~\cite{demoura2008z3}). In our symbolic model, each state is associated with a sequence of inter-event times, which is similar in spirit to \cite{lecorronc2013mode}. When generating these finitely many discrete states, exhaustive search can be avoided by employing a recursive algorithm.
	
	The clear advantages of our new model are the exact enumeration of sampling sequences that can be generated by PETC on a significant future horizon and the establishment of tight bounds on the Lyapunov function convergence speed. Naturally, a disadvantage of our presented method is a substantial growth in the number of discrete states when compared to \cite{gleizer2020scalable}.
	Finally, we show how to modify our bisimilar model to obtain a tight traffic model simulating an unmodified PETC system, \rev{presenting two derived results: tighter decay rate estimation (compared to, e.g., \cite{heemels2013periodic}), and maximum average triggering frequency computation.}

	\subsection{Notation}
	
	We denote by $\No$ the set of natural numbers including zero, $\N \coloneqq \No \setminus \{0\}$, $\N_{\leq n} \coloneqq \{1,2,...,n\}$, and $\R_+$ the set of non-negative reals. We denote by $\ceil{a}$ the smallest integer not smaller than $a\in\R$. We denote by $|\xv|$ the norm of a vector $\xv \in \R^n$, but if $s$ is a sequence or set, $|s|$ denotes its length or cardinality, respectively. For a square matrix $\Am \in \R^{n \times n},$ we write $\Am \succ \O$ ($\Am \succeq \O$) if $\Am$ is positive definite (semi-definite). The set $\S^n$ denotes the set of symmetric matrices in $\R^n$. For $\Pm \in \S^n$, $\lambda_{\max}(\Pm)$ ($\lambda_{\min}(\Pm)$) denotes its maximum (minimum) eigenvalue. \rev{For a set $\Xs\subseteq\Omega$, we denote by $\bar{\Xs}$ its complement: $\Omega \setminus \Xs$}.
	\rev{For a relation $\Rs \subseteq \Xs_a \times \Xs_b$, its inverse is denoted as} $\Rs^{-1} = \{(x_b, x_a) \in \Xs_b \times \Xs_a : (x_a, x_b) \in \Rs\}$. Finally, we denote by $\pi_\Rs(\Xs_a) \coloneqq \{ x_b \in \Xs_b | (x_a, x_b) \in \Rs \text{ for some } x_a \in \Xs_b\}$ the natural projection $\Xs_a$ onto $\Xs_b$.
	
	We say that an autonomous system $\dot{\xiv}(t) = f(\xiv(t))$ is globally exponentially stable (GES) if there exist $M < \infty$ and $b > 0$ such that every solution of the system satisfies $|\xiv(t)| \leq M\e^{-bt}|\xiv(0)|$ for every initial state $\xiv(0)$; moreover, we call $b$ a decay rate estimate of the system. When convenient, we use $\xiv_{\xv}(t)$ to denote a trajectory from initial state $\xiv(0) = \xv.$
	
	\section{Preliminaries}
	
	\subsection{Periodic event-triggered control}
	
	Consider a linear plant controlled with sample-and-hold state feedback described by
	\begin{align}
	\dot{\xiv}(t) &= \Am\xiv(t) + \Bm\Km\hat{\xiv}(t),\label{eq:plant}\\
	\xiv(0) &= \hat{\xiv}(0) = \rev{\xv_0}, \nonumber
	\end{align}
	where $\xiv(t) \in \R^\nx$ is the plant's state with initial value $\rev{\xv_0}$, $\hat{\xiv}(t) \in \R^\nx$ is the measurement of the state available to the controller, $\Km\hat{\xiv}(t) \in \R^\nup$ is the control input, $\nx$ and $\nup$ are the state-space and input-space dimensions, respectively, and $\Am, \Bm, \Km$ are matrices of appropriate dimensions.
	The holding mechanism is zero-order: let $t_i \in \R_+, i \in \N_0$ be a sequence of sampling times, with $t_0 = 0$ and $t_{i+1} - t_i > \varepsilon$ for some $\varepsilon > 0$; then $\hat{\xiv}(t) = \xiv(t_i), \forall t \in [t_i, t_{i+1})$.
	
	In ETC, a \emph{triggering condition} determines the sequence of times $t_i$. In PETC, this condition is checked only periodically, with a fundamental checking period $h$. We consider the family of static \emph{quadratic triggering conditions} from \cite{heemels2013periodic} with an additional maximum inter-event time condition below:
	\begin{equation}\label{eq:quadtrig}
	t_{i+1} = \inf\left\{\!kh>t_i, k \in \N \middle|
	\!\begin{array}{c}
	\begin{bmatrix}\xiv(kh) \\ \rev{\xiv(t_i)}\end{bmatrix}\tran
	\!\Qm \begin{bmatrix}\xiv(kh) \\ \rev{\xiv(t_i)}\end{bmatrix} > 0\! \\
	\rev{\text{ or }} \ kh-t_i \leq \bar{k}h\phantom{\dot{\hat{I}}}
	\end{array}\!
	\right\}\!,
	\end{equation}
	where $\Qm \in \S^{2\nx}$ is the designed triggering matrix, and $\bar{k}$ is the chosen maximum (discrete) inter-event time.\footnote{Often a maximum inter-event time arises naturally from the closed-loop system itself (see \cite{gleizer2018selftriggered}). Still, one may want to set a smaller maximum inter-event time so as to establish a ``heart beat'' of the system.} Many of the triggering conditions available in the literature can be written as in Eq.~\eqref{eq:quadtrig}; the interested reader may refer to \cite{heemels2013periodic} for a comprehensive list of triggering and stability conditions.
	
	As noted in \cite{gleizer2020scalable}, the inter-event time $t_{i+1} - t_{i}$ is a function of \rev{$\xv_i \coloneqq \xiv(t_i)$}; denoting $\kappa \coloneqq (t_{i+1}-t_i)/h$ as the discrete inter-event time, from Eq.~\eqref{eq:quadtrig} it follows that
	\begin{gather}
	\kappa(\rev{\xv_i}) = \min\left\{k \in \{1, 2, ...\bar{k}\}: \rev{\xv_i}\tran\Nm(k)\rev{\xv_i} > 0 \rev{\text{ or }} k=\bar{k}\right\}, \nonumber\\
	\Nm(k) \coloneqq \begin{bmatrix}\Mm(k) \\ \I\end{bmatrix}\tran
	\Qm \begin{bmatrix}\Mm(k) \\ \I\end{bmatrix}, \label{eq:petc_time}\\
	\!\!\Mm(k) \coloneqq \Am_\d(k) + \Bm_\d(k)\Km \coloneqq \e^{\Am hk} + \int_0^{hk}\e^{\Am\tau}\d\tau \Bm\Km.\!\!\nonumber
	\end{gather}
	where $\I$ denotes the identity matrix.
	
	\subsection{Transition systems}
	
	We use the framework of \cite{tabuada2009verification} to formally relate systems of different natures, e.g., those described by differential equations with those described by finite-state machines. First, a generalized notion of transition system is given:
	\begin{defn}[Transition System \cite{tabuada2009verification}]\label{def:system} 
		A system $\Ss$ is a tuple $(\Xs,\Xs_0,\Us,\Es,\Ys,H)$ where:
		\begin{itemize}
			\item $\Xs$ is the set of states,
			\item $\Xs_0 \subseteq \Xs$ is the set of initial states,
			\item $\Us$ is the set of inputs,
			\item $\Es \subseteq \Xs \times \Us \times \Xs$ is the set of edges (or transitions),
			\item $\Ys$ is the set of outputs, and
			\item $H: \Xs \to \Ys$ is the output map.
		\end{itemize}
	\end{defn}
	A system is said to be finite (infinite) state when the cardinality of $\Xs$ is finite (infinite). A system is autonomous if $\Us = \emptyset$, in which case a transition is denoted by a pair $(x, x') \in \Xs\times\Xs$ instead of a triplet. Hereafter, we focus on autonomous systems. For these cases, we define $\Post_\Ss(x) \coloneqq \{x'| (x,x') \in \Es\}$. 
	We call $x_0 \to x_1 \to x_2 \to ...$ an \emph{infinite internal behavior} of $\Ss$ if $x_0 \in \Xs_0$ and $(x_i,x_{i+1}) \in \Es$, and $y_0 \to y_1 \to ...$ its corresponding \emph{infinite external behavior}, or \emph{trace}, if $H(x_i) = y_i$. 
	
	The concept of simulation relation is fundamental for relating two transition systems.
	
	\begin{defn}[Simulation Relation \cite{tabuada2009verification}]\label{def:sim}
		Consider two systems $\Ss_a$ and $\Ss_b$ with $\Ys_a$ = $\Ys_b$. A relation $\Rs \subseteq \Xs_a \times \Xs_b$ is a simulation relation from $\Ss_a$ to $\Ss_b$ if the following conditions are satisfied:
		\begin{enumerate}
			\item[i)] for every $x_{a0} \in \Xs_{a0}$, there exists $x_{b0} \in \Xs_{b0}$ with $(x_{a0}, x_{b0}) \in \Rs;$
			\item[ii)] for every $(x_a, x_b) \in \Rs, H_a(x_a) = H_b(x_b);$
			\item[iii)] for every $(x_a, x_b) \in \Rs,$ we have that $(x_a, x_a') \in \Es_a$ implies the existence of $(x_b, x_b') \in \Es_b$ satisfying $(x_a', x_b') \in \Rs.$
		\end{enumerate}
	\end{defn}
	We say $\Ss_a \preceq \Ss_b$ when $\Ss_b$ simulates $\Ss_a$. 
	If $\Ss_a \preceq \Ss_b$, from the definition above, it becomes clear that any sequence of outputs from $\Ss_a$ can be generated by $\Ss_b$; the converse is not true, unless there is in fact a bisimulation:
	\begin{defn}[Bisimulation \cite{tabuada2009verification}]
		Consider two systems $\Ss_a$ and $\Ss_b$ with $\Ys_a$ = $\Ys_b$.
		$\Ss_a$ is said to be bisimilar to $\Ss_b$, denoted $\Ss_a \bisim \Ss_b$, if there exists a relation
		$\Rs$ such that:
		\begin{itemize}
			\item $\Rs$ is a simulation relation from $\Ss_a$ to $\Ss_b$;
			\item $\Rs^{-1}$ is a simulation relation from $\Ss_b$ to $\Ss_a$.
		\end{itemize}
	\end{defn}
	%
	\section{Problem formulation}
	
	Consider system \eqref{eq:plant}--\eqref{eq:quadtrig}, a quadratic Lyapunov function $V(\xv) = \xv\tran\Pm\xv, \Pm \succ \O$, and the following assumptions:
	\begin{assum}\label{assum:stable} System \eqref{eq:plant}--\eqref{eq:quadtrig} is GES, and there exists a known constant $0 \leq a < 1$ such that every solution of the system satisfies $V(\xiv(t_{i+1})) \leq aV(\xiv(t_i))$.
	\end{assum}
	\begin{rem}\label{rem:a_computation}
	To compute $a$, one can verify the implication $$\forall \xv \!\in\! \R^\nx (\forall i \in \N_{< k} \, \xv\tran\Nm(i)\xv \leq 0) \wedge (\xv\tran\Nm(k)\xv > 0) \implies V(\Mm(k)\xv) \leq a V(\xv)$$ for every $k\in\{1,...\bar{k}\}$. This can be cast as a set of LMIs through the S-procedure.
	\end{rem}
	
	Note that, from Eq.~\eqref{eq:petc_time}, no triggering can occur at $k$ if $\Nm(k) \preceq 0$. Thus, we can determine the global minimum inter-sample time as $h\underline{k}$, with $\underline{k} \!\coloneqq\! \min\{k\! \in\! \{1,...,\bar{k}\}: \Nm(k)\! \npreceq\! \O\}$.
	
	\begin{assum}\label{assum:periodicstable} For system \eqref{eq:plant}, there exists some $h_{\periodic} > 0$ such that the periodic sampling sequence with $t_{i+1} = t_i + h_{\periodic}$ ensures $V(\xiv(t_{i+1}) \leq V(\xiv(t_i))$.
	\end{assum}
	
	This will not necessarily follow from Assumption \ref{assum:stable}; however, ETC is typically designed based on a continuous-time Lyapunov function, and for small enough values of $h$, the same Lyapunov function will work for periodic control.%
	\footnote{This is easy to see when one considers the first order approximation of the discrete-time transition matrix $\e^{\Am h} \approxeq \I + \Am h$. If the continuous-time Lyapunov inequality $\Am\tran\Pm + \Pm\Am \preceq -\epsilon\I$ holds for some $\epsilon > 0$ then:  $h(\Am\tran\Pm + \Pm\Am) \preceq -h\epsilon\I 
		\iff (\I + \Am h)\tran\Pm(\I + \Am h) - \Pm \preceq -h\epsilon\I + h^2\Am\tran\Pm\Am$, which for sufficiently small $h$ results in $\e^{\Am\tran h}\Pm\e^{\Am h} - \Pm \preceq -h\epsilon\I \prec \O$, i.e. the discrete-time Lyapunov inequality.}
	
	\begin{assum}\label{assum:compact} A value $V_0 > 0$ is known such that $\xiv(0) \in \Xs_0 = \{\xv\in\R^\np: V(\xv) \leq V_0\}$.
	\end{assum}
	
	Now let us propose a modification to the PETC system. \rev{Since ETC can reduce communication frequency while ensuring a fast decay rate, it makes practical sense to focus on ETC during the transient phase. However, once states are close enough to the origin, decay rates have disputable practical relevance. Therefore, we admit that, when $\hat{\xiv}(t)$ enters a small sublevel set $\Xs_{\periodic} \coloneqq \{\xv\in\R^\np| V(x) \leq rV_0\}, r < 1$, the controller can switch to periodic sampling, \rev{with $h_{\periodic}$ significantly bigger than $h$}; in fact, it can be as big as possible, provided that it preserves Assumption \ref{assum:periodicstable}. This results in more predictable (hence schedulable) traffic while retaining a reduction of traffic.}  
	Let us denote by $t_{i+1}(t_i, \xiv(t_i))|_{\mathrm{PETC}}$ the solution of Eq.~\eqref{eq:quadtrig}. Mathematically, the mixed sampling strategy\rev{, hereafter denoted MPETC,} dictates the sampling times as follows:
	\begin{equation}\label{eq:mixedtiming}
	\begin{aligned}
	t_{i+1} &= t_{i+1}(t_i, \xiv(t_i))|_{\mathrm{PETC}}, && V(\xiv(t_i)) > rV_0 \\
	t_{i+1} &= t_i + h_{\periodic}, && V(\xiv(t_i)) \leq rV_0.
	\end{aligned}
	\end{equation}
	
	Hereafter, denote $\Xs_{\periodic} \coloneqq \{\xv \in \R^\nx|V(\xv) \leq rV_0\} = r\Xs_0$. 
	This system has the following infinite-state traffic model: $\Ss_{\MPETC} \coloneqq (\Xs, \Xs_0, \emptyset, \Es_\etc \cup \Es_\periodic, \Ys_{\MPETC}, H_{\MPETC})$ where
	\begin{multline}\label{eq:original}
	\begin{aligned}
	\Xs &= \Xs_0; \\
	\Es_\etc &= \{(\xv, \xv') \in (\Xs \setminus \Xs_{\periodic}) \times \Xs: \xv' = \xiv_{\xv}(h\kappa(\xv))\}; \\
	\Es_\periodic &= \{(\xv, \xv') \in \Xs_{\periodic} \times \Xs_{\periodic}: \xv' = \xiv_{\xv}(h_{\periodic})\}; \\
	\Ys_{\MPETC} &= \{h, 2h, ..., \bar{k}h, h_{\periodic}\};\\
	H_{\MPETC}(\xv) &= \begin{cases} h\kappa(x), & \xv \in \Xs \setminus \Xs_{\periodic}, \\
	h_{\periodic}, & \xv \in \Xs_{\periodic}.
	\end{cases}
	\end{aligned}
	\raisetag{3\baselineskip}
	\end{multline}
	For states starting outside $\Xs_{\periodic}$, transitions and outputs (the inter-sample times) are dictated by the PETC strategy; for states inside $\Xs_{\periodic}$, transitions and outputs are dictated by periodic sampling. Note that $\Es_\periodic$ is defined over $\Xs_{\periodic} \times \Xs_{\periodic}$, i.e., states starting in $\Xs_{\periodic}$ always land in $\Xs_{\periodic}$: this comes from the fact that the periodic \rev{phase} is forward-invariant due to Assumption \ref{assum:periodicstable}. 
	We are ready to define our main problem:
	\begin{prob}\label{prob:bisim} Considering Assumptions \ref{assum:stable}--\ref{assum:compact}, determine if $\Ss_{\MPETC}$ admits a computable finite-state bisimulation. If so, provide an algorithm to compute it.
	\end{prob}
	
	\section{Main result}
	
	To build a bisimilar model of $\Ss_{\MPETC}$, the main observation is that eventually all trajectories of the system \eqref{eq:plant}, \eqref{eq:mixedtiming} enter $\Xs_{\periodic}$, which follows from Assumption \ref{assum:stable}. Clearly, when in $\Xs_{\periodic}$ the system admits a trivial, single-state traffic bisimulation:
	\begin{prop}\label{prop:periodic_bisim}
		Define $H_\periodic: \R^\nx \to \R$ such that $H_\periodic \equiv h_{\periodic}$. The system $$\Ss_\periodic^\Bisim = (\{\Xs_{\periodic}\}, \{\Xs_{\periodic}\}, \emptyset, \{(\Xs_{\periodic}, \Xs_{\periodic})\}, \{h_{\periodic}\}, H_\periodic)$$ is a bisimilar quotient system of $(\Xs_{\periodic}, \Xs_{\periodic}, \emptyset, \Es_\etc \cup \Es_\periodic, \Ys_{\MPETC}, H_{\MPETC})$.
	\end{prop}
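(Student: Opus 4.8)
The plan is to exhibit the natural relation
$\Rs \coloneqq \{(\xv, \Xs_{\periodic}) : \xv \in \Xs_{\periodic}\}$,
which relates every concrete state of the restricted MPETC model to the unique abstract state $\Xs_{\periodic}$ of $\Ss_\periodic^\Bisim$, and then to verify directly, against Definition~\ref{def:sim}, that $\Rs$ is a simulation relation from the concrete system to $\Ss_\periodic^\Bisim$ and that $\Rs^{-1}$ is a simulation relation in the reverse direction; together these give the bisimulation.

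First I would dispatch the two easy conditions. For output consistency (condition ii), any $\xv$ appearing in $\Rs$ lies in $\Xs_{\periodic}$, so the definition of $H_{\MPETC}$ in Eq.~\eqref{eq:original} gives $H_{\MPETC}(\xv) = h_{\periodic} = H_\periodic(\Xs_{\periodic})$, and the same equality serves for $\Rs^{-1}$. Condition (i) is immediate in both directions: the single abstract initial state $\Xs_{\periodic}$ is related to every concrete initial state, and conversely $\Xs_{\periodic}$ is nonempty (it contains the origin, since $V(\O)=0\le rV_0$), so a matching concrete initial state exists.

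The substance of the argument is the transition condition (iii). Because $\Es_\etc$ is defined only over $(\Xs\setminus\Xs_{\periodic})\times\Xs$, no $\Es_\etc$-edge originates from a state in $\Xs_{\periodic}$; hence every outgoing concrete edge is an $\Es_\periodic$-edge of the form $(\xv,\xiv_{\xv}(h_{\periodic}))$. The key fact is that $\Es_\periodic \subseteq \Xs_{\periodic}\times\Xs_{\periodic}$, i.e.\ $\Xs_{\periodic}$ is forward-invariant under periodic sampling: by Assumption~\ref{assum:periodicstable}, $\xv\in\Xs_{\periodic}$ implies $V(\xiv_{\xv}(h_{\periodic})) \le V(\xv) \le rV_0$, so the successor again lies in $\Xs_{\periodic}$. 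Consequently each concrete successor $\xv'$ satisfies $(\xv',\Xs_{\periodic})\in\Rs$ and is matched by the unique abstract self-loop $(\Xs_{\periodic},\Xs_{\periodic})$, verifying (iii) for $\Rs$. For $\Rs^{-1}$ I would use that periodic sampling is deterministic and total on $\Xs_{\periodic}$: for every $\xv\in\Xs_{\periodic}$ the successor $\xiv_{\xv}(h_{\periodic})$ exists and stays in $\Xs_{\periodic}$, so the abstract self-loop is always matched by a genuine concrete transition with $(\Xs_{\periodic},\xv')\in\Rs^{-1}$.

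I do not anticipate a real obstacle: the statement amounts to checking that the trivial one-block partition of $\Xs_{\periodic}$ is a bisimulation quotient. The only point requiring care is invoking forward-invariance correctly, so that every concrete transition remains inside $\Xs_{\periodic}$ and hence is matched by the single abstract self-loop; this is precisely where Assumption~\ref{assum:periodicstable} is needed, and it is what the restriction of $\Es_\periodic$ to $\Xs_{\periodic}\times\Xs_{\periodic}$ already encodes.
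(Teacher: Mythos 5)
Your proposal is correct and matches the paper's intent: the paper states this proposition without an explicit proof, regarding it as the ``trivial, single-state traffic bisimulation,'' and your argument via the natural relation $\Rs = \{(\xv, \Xs_{\periodic}) : \xv \in \Xs_{\periodic}\}$ is precisely the spelled-out version of that one-block quotient argument. Your identification of the two load-bearing facts --- that no $\Es_\etc$ edge originates in $\Xs_{\periodic}$, and that forward invariance of $\Xs_{\periodic}$ under Assumption~\ref{assum:periodicstable} makes the periodic successor both exist and remain related to the single abstract state --- is exactly what makes the paper's ``clearly'' legitimate.
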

	
	Another important observation is that, since the PETC \rev{phase} is asymptotically stable (Assumption \ref{assum:stable}), states from $\Xs_0$ reach $\Xs_{\periodic}$ in finite time. Thus, for any state in $\Xs_0$, there is a finite number of PETC-generated samples, after which all samples are periodically taken. Let $K \coloneqq \{\kmin, \kmin+1,...,\kmax\}$; since at each step there are finitely many ($|K|$) inter-sample time possibilities, we can state the following:
	\begin{lem}\label{lem:finite} Let Assumptions \ref{assum:stable}--\ref{assum:compact} hold, define $N \coloneqq \ceil{\log_a(r)}$. Then system \eqref{eq:original} can produce at most $|K|(\left(|K|-1\right)^N-1)(|K|-1)^{-1}$ different traces.
	\end{lem}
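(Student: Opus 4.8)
The plan is to establish two facts: that along every behavior of $\Ss_{\MPETC}$ only finitely many PETC samples occur before $\Xs_\periodic$ is reached, with horizon at most $N$; and that, since every trace ends in the fixed periodic tail $h_\periodic, h_\periodic, \dots$, counting traces reduces to counting the finitely many PETC output prefixes.

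First I would bound the length of the PETC phase using the contraction in Assumption \ref{assum:stable}. Along any internal behavior that uses PETC transitions, each sample multiplies $V$ by at most $a$, so after $n$ PETC steps $V(\xiv(t_n)) \leq a^n V(\xiv(t_0)) \leq a^n V_0$, where the last inequality is Assumption \ref{assum:compact}. Since $0 \leq a < 1$, the state is guaranteed to lie in $\Xs_\periodic = \{V \leq rV_0\}$ as soon as $a^n \leq r$; taking logarithms and reversing the inequality (because $\ln a < 0$) this holds once $n \geq \log_a r$, hence after at most $N = \ceil{\log_a r}$ PETC samples. By the forward-invariance of $\Xs_\periodic$ encoded in $\Es_\periodic$ (Assumption \ref{assum:periodicstable}), the behavior then remains in $\Xs_\periodic$ and emits only $h_\periodic$ thereafter. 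Thus every behavior uses at most $N$ edges of $\Es_\etc$, followed exclusively by edges of $\Es_\periodic$.

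Next I would reduce trace counting to prefix counting. By the previous step, every trace has the form $w\, h_\periodic\, h_\periodic \cdots$, where $w$ is a sequence of at most $N$ PETC outputs; hence the number of distinct traces is at most the number of realizable PETC output prefixes of length $\ell \in \{0,1,\dots,N\}$. Because $H_{\MPETC}$ restricted to $\Xs \setminus \Xs_\periodic$ takes values $\kappa(\xv)h$ with $\kappa(\xv) \in K$, there are at most $|K|$ possible outputs at each PETC step, so at most $|K|^\ell$ prefixes of length $\ell$. I would encode the realizable prefixes as a rooted tree of depth at most $N$ and branching at most $|K|$, and sum the resulting geometric series over the admissible lengths to obtain the stated closed-form bound.

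The main obstacle is the first step: the horizon bound must hold uniformly over all behaviors, so the per-step contraction of Assumption \ref{assum:stable} has to be iterated along arbitrary PETC paths and combined with the definition of $N$ so that $a^N \leq r$ holds exactly, which requires care with the ceiling and the sign of $\ln a$ (and a separate check of the degenerate case $a = 0$). One must also confirm that entry into $\Xs_\periodic$ genuinely terminates the PETC phase, which is exactly the invariance built into $\Es_\periodic$. The counting step is then routine geometric bookkeeping; the only subtlety there is accounting for the boundary between the PETC prefix and the periodic tail, and for the trivial all-periodic trace, so as to match the precise constant in the statement.
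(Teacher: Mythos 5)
Your proposal is correct and follows essentially the same route as the paper's own proof: iterate the per-sample contraction $V(\xiv(t_{i+1})) \leq aV(\xiv(t_i))$ of Assumption \ref{assum:stable} to conclude that at most $N = \ceil{\log_a r}$ PETC samples occur before $\Xs_{\periodic}$ is reached (after which Proposition \ref{prop:periodic_bisim} forces the constant tail $h_\periodic, h_\periodic, \dots$), and then bound the number of traces by $\sum_{\ell=0}^{N}|K|^\ell$, one term per admissible PETC-prefix length. The only caveat---shared with the paper, which performs the same final step without detail---is that you do not actually check that this geometric sum matches the stated constant; in fact it does not, since $\sum_{\ell=0}^{N}|K|^\ell = (|K|^{N+1}-1)(|K|-1)^{-1}$ while the expression in the lemma reads $|K|(\left(|K|-1\right)^N-1)(|K|-1)^{-1}$, which appears to be a typo for $|K|(|K|^N-1)(|K|-1)^{-1}$, so your careful handling of the ceiling, the sign of $\ln a$, and the $\ell=0$ trace is, if anything, more precise than the original argument.
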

	\begin{proof}
		Using assumption \ref{assum:stable}, recursively apply $V(\xiv(t_{i+1}) \leq aV(\xiv(t_i))$ to get $V(\xiv(t_N) \leq a^NV(\rev{\xv_0}) \leq a^NV_0.$ Then, $N > \log_a(r)$ implies $a^NV_0 \leq rV_0$; thus, it takes at most $N$ steps to enter $\Xs_{\periodic}$. After this, from Proposition \ref{prop:periodic_bisim}, the remaining trace is $h_{\periodic},h_{\periodic},...$. This is the trace if $\rev{\xv_0}\in\Xs_{\periodic}$, which accounts for one trace; $\Ss_{\MPETC}$ has at most $|K|$ traces for which it takes one step to reach $\Xs_{\periodic}$ from $\rev{\xv_0}$, at most $|K|^2$ traces for which it takes two steps to reach $\Xs_{\periodic}$, and etc., up to $|K|^N$ for the maximum number of steps. Summing up this geometric series gives $|K|(\left(|K|-1\right)^N-1)(|K|-1)^{-1}$.
	\end{proof}
	
	Lemma \ref{lem:finite} permits the construction of a rather straightforward finite-state model similar to $\Ss_{\MPETC}$.  Denote by $K^m$ the set of all sequences of length $m$ of the form $(k_i)_{i=0}^m, k_i \in K.$ We create one state for each sequence in $K^m$. The state $\dummy{k_1k_2...k_m}$ is associated with the trace $hk_1, hk_2, ..., hk_m, h_{\periodic}, h_{\periodic}, ...$, thus taking $m$ samples to enter the periodic \rev{phase}. By definition, its successor must be $\dummy{k_2...k_m}$. Finally, let $\varepsilon$ denote the empty sequence; a state $\dummy{k}$ generates the trace $hk, h_{\periodic}, h_{\periodic}, ...$, and thus its successor is $\dummy\varepsilon$, associated with the periodic \rev{phase}. Hence, $\Post(\dummy\varepsilon) = \dummy\varepsilon$ and $H^{\Sim}(\dummy\varepsilon) = h_{\periodic}$. Let $\KN \coloneqq \cup_{i=1}^NK^i \cup \{\varepsilon\}$; we consolidate this modeling strategy with the following result:
	\begin{prop}\label{prop:simtrivial}
		Let Assumptions \ref{assum:stable}--\ref{assum:compact} hold and $N \coloneqq \ceil{\log_a(r)}$. Consider $\Ss^{\Sim} \!\coloneqq\! \left(\KN, \KN, \emptyset, \Es^{\Sim}, \Ys_{\MPETC}, H^{\Sim}\right)\!$ with
		\begin{itemize}
			\item $\Es^{\Sim} = \{(\dummy{k\sigma}, \dummy\sigma)|\dummy{k\sigma}\in\KN\} \cup \{(\dummy\varepsilon, \dummy\varepsilon)\}$;
			\item $H^{\Sim}(\dummy{k\sigma}) = hk$ and $H^{\Sim}(\dummy\varepsilon) = h_{\periodic}$.
		\end{itemize}
		Then $\Ss^{\Sim} \succeq \Ss_{\MPETC}$. %
	\end{prop}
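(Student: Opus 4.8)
The plan is to exhibit an explicit simulation relation $\Rs \subseteq \Xs_0 \times \KN$ from $\Ss_{\MPETC}$ to $\Ss^{\Sim}$ and verify the three conditions of Definition~\ref{def:sim}. The MPETC dynamics are deterministic, since both $\kappa(\xv)$ and the flow $\xiv_{\xv}$ are single-valued; hence each state $\xv \in \Xs_0$ generates a unique internal behavior under $\Es_\etc \cup \Es_\periodic$. I would associate to $\xv$ the finite prefix $\sigma(\xv) = k_1 k_2 \cdots k_m$ of discrete inter-event times taken before the trajectory first enters $\Xs_{\periodic}$ (setting $\sigma(\xv) = \varepsilon$ when $\xv \in \Xs_{\periodic}$ already), and define $\Rs \coloneqq \{(\xv, \sigma(\xv)) : \xv \in \Xs_0\}$.

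The first step is to show $\Rs$ is well-defined and lands in $\KN$. By the recursive application of Assumption~\ref{assum:stable} used in the proof of Lemma~\ref{lem:finite}, one has $V(\xiv(t_N)) \leq a^N V_0 \leq rV_0$, so every trajectory enters $\Xs_{\periodic}$ within at most $N$ samples; thus $m \leq N$ and $\sigma(\xv) \in \bigcup_{i=1}^N K^i \cup \{\varepsilon\} = \KN$. Because each $\xv$ yields exactly one such prefix, $\Rs$ is total on $\Xs_0$, and since every state of both systems is initial, condition~(i) holds immediately.

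Next I would check output matching, condition~(ii). If $\xv \in \Xs_{\periodic}$ then $\sigma(\xv) = \varepsilon$ and $H_{\MPETC}(\xv) = h_{\periodic} = H^{\Sim}(\varepsilon)$; otherwise $\sigma(\xv) = k_1 \cdots k_m$ with $k_1 = \kappa(\xv)$, giving $H_{\MPETC}(\xv) = h\kappa(\xv) = hk_1 = H^{\Sim}(k_1 \cdots k_m)$. For the transition condition~(iii), I would again use determinism: a transition $(\xv, \xv') \in \Es_\etc \cup \Es_\periodic$ has a unique target. When $\xv \in \Xs_{\periodic}$, forward invariance (Assumption~\ref{assum:periodicstable}) gives $\xv' \in \Xs_{\periodic}$, so $\sigma(\xv') = \varepsilon$ and the self-loop $(\varepsilon, \varepsilon) \in \Es^{\Sim}$ pairs with $(\xv', \varepsilon) \in \Rs$. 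When $\xv \notin \Xs_{\periodic}$, the successor is $\xv' = \xiv_{\xv}(h\kappa(\xv))$, whose generated prefix is exactly the tail $k_2 \cdots k_m$ of $\sigma(\xv)$; writing $\sigma(\xv) = k_1\sigma'$ with $\sigma' = \sigma(\xv')$, the edge $(k_1\sigma', \sigma') \in \Es^{\Sim}$ exists by construction and pairs with $(\xv', \sigma') \in \Rs$, which also covers the boundary case $m = 1$ where $\sigma' = \varepsilon$ and $\xv' \in \Xs_{\periodic}$.

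The verification is essentially routine once the relation is fixed; the only delicate point is the tail observation in step~(iii), namely that advancing the state by one PETC sample shifts the associated sequence by exactly one symbol. This is precisely where it matters that $\kappa$ depends only on the current state rather than on absolute time, and where the boundary cases---$\xv$ already inside $\Xs_{\periodic}$, and the final PETC sample just before entering it---must be treated explicitly so that the finite-prefix bookkeeping stays consistent with the absorbing self-loop at $\varepsilon$.
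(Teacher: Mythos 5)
Your proof is correct, but it takes a genuinely different route from the paper. The paper's proof is behavioral: it observes that $\Ss^{\Sim}$ generates every trace of the form $hk_1,\dots,hk_m,h_{\periodic},h_{\periodic},\dots$ with $0\leq m\leq N$, invokes Lemma~\ref{lem:finite} to conclude that the traces of $\Ss_{\MPETC}$ are contained in those of $\Ss^{\Sim}$, and then converts trace inclusion into a simulation by citing the fact that both systems are deterministic and non-blocking (Proposition 4.11 of \cite{tabuada2009verification}). You instead construct the simulation relation explicitly---mapping each $\xv\in\Xs_0$ to its finite prefix $\sigma(\xv)$ of discrete inter-event times until $\Xs_{\periodic}$ is reached---and verify conditions (i)--(iii) of Definition~\ref{def:sim} directly, with the key step being that one PETC sample shifts the prefix by exactly one symbol (time-invariance of $\kappa$ and forward invariance of $\Xs_{\periodic}$). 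What the paper's route buys is brevity and direct reuse of Lemma~\ref{lem:finite}; what your route buys is self-containedness (no appeal to the external determinism-based equivalence of behavioral inclusion and simulation) and, more interestingly, the relation you build is precisely the relation $\Rs_{\Bisim}$ of Definition~\ref{def:bisimrel} that the paper introduces afterwards, so your argument essentially anticipates Step 1 of the proof of Theorem~\ref{thm:bisim}; the paper defers that work and keeps Proposition~\ref{prop:simtrivial} lightweight.
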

	
	\begin{proof}
		System $\Ss^{\Sim}$ generates all possible traces of type $hk_1, hk_2, ..., hk_m,\allowbreak h^*, h^*, ...,$ for $0 \leq m \leq N$, which, according to Lemma \ref{lem:finite}, include all possible traces of $\Ss_{\MPETC}$; thus, the behavior of $\Ss^{\Sim}$ contains that of $\Ss_{\MPETC}$. Because both systems $\Ss_{\MPETC}$ and $\Ss^{\Sim}$ are deterministic and non-blocking, this implies that $\Ss^{\Sim} \succeq \Ss_{\MPETC}$ \cite[Proposition 4.11]{tabuada2009verification}.  
	\end{proof}
	
	The set $\KN$ includes sequences that may not be generated by the PETC \rev{phase}. To trim off these spurious sequences, let us define the following relation:
	\begin{defn}[Inter-sample sequence relation]\label{def:bisimrel} We denote by $\Rs_{\Bisim} \subseteq \Xs \times \KN$ the relation satisfying 
		\begin{equation}\label{eq:Xempty}
		(\xv,\dummy\varepsilon) \in \Rs_{\Bisim} \text{ iff } \xv \in \Xs_{\periodic},
		\end{equation}
		and
		$(\xv,\dummy{k_1k_2...k_m}) \in \Rs_{\Bisim}$ if and only if
		\begin{subequations}\label{eq:sequence}
			\begin{align}
			\xv &\in \Xs_0, \label{eq:sequence_x0}\\
			\xv &\in \Qs_{k_1}, \label{eq:sequence_k1}\\
			\Mm(k_1)\xv &\in \Qs_{k_2}, \label{eq:sequence_k2}\\
			\Mm(k_2)\Mm(k_1)\xv &\in \Qs_{k_3}, \label{eq:sequence_k3}\\
			& \vdots \nonumber\\
			\Mm(k_{m-1})...\Mm(k_1)\xv &\in \Qs_{k_m}, \label{eq:sequence_km-1}\\
			\xv &\notin \Xs_{\periodic}, \label{eq:sequence_x0_not_periodic}\\
			\Mm(k_1)\xv &\notin \Xs_{\periodic}, \label{eq:sequence_x1_not_periodic}\\
			& \vdots \nonumber\\
			\Mm(k_{m-1})...\Mm(k_1)\xv &\notin \Xs_{\periodic}, \label{eq:sequence_xm-1_not_periodic} \\
			\Mm(k_m)...\Mm(k_1)\xv &\in \Xs_{\periodic}, \label{eq:sequence_km}
			\end{align}
		\end{subequations}
		where 
		\begin{equation}\label{eq:setq}
		\begin{gathered}
		\Qs_k \coloneqq \Ks_k \setminus \left(\bigcap_{j=\underline{k}}^{k-1} \Ks_{j}\right) = \Ks_k \cap \bigcap_{j=\underline{k}}^{k-1} \bar{\Ks}_{j}, \\
		\Ks_k \coloneqq \begin{cases}
		\{\xv \in \Xs| \xv\tran\Nm(k)\xv > 0\}, & k < \bar{k}, \\
		\Xs, & k = \bar{k}.
		\end{cases}
		\end{gathered}
		\end{equation}
	\end{defn}
	
	Eq.~\eqref{eq:setq}, taken from \cite{gleizer2020scalable}, defines the sets $\Qs_k$, containing the states that trigger exactly with inter-sample time $hk$. Eq.~\eqref{eq:Xempty} determines that states $\xv \in \Xs_\periodic$ are related to the state $\varepsilon$. Finally, a state $\xv \in \R^n$ is related to a state $k_1k_2...k_m$ of the abstraction if the following are satisfied: 1) it belongs to the compact set of interest (Eq.~\eqref{eq:sequence_x0}), 2) the inter-sample time sequence that it generates up until it enters $\Xs_\periodic$ is $hk_1,hk_2,...,kh_m$ (Eqs.~\eqref{eq:sequence_k1}--\eqref{eq:sequence_km-1}), and 3) the sampled states $\xiv_{\xv}(k_1h), \xiv_{\xv}((k_1+k_2)h), ...$ of the trajectory starting from $\xv$ do not belong to $\Xs_\periodic$ (Eqs.~\eqref{eq:sequence_x0_not_periodic}--\eqref{eq:sequence_xm-1_not_periodic}), while the m-th sampled state does belong to $\Xs_\periodic$ (Eq.~\eqref{eq:sequence_km}).
	
	We now employ the relation $\Rs_{\Bisim}$ to derive a finite model bisimilar to $\Ss_{\MPETC}$ as follows: 
	\begin{defn}\label{def:bisim} The MPETC traffic model is the system $$\Ss^{\Bisim} \coloneqq \left(\Xs^{\Bisim}, \Xs^{\Bisim}, \emptyset, \Es^{\Sim}, \Ys_{\MPETC}, H^{\Sim}\right)$$ with $\Xs^{\Bisim} \coloneqq \pi_{\Rs_{\Bisim}}(\Xs)$.
	\end{defn}
	
	This model is a subset of $\Ss^{\Sim}$, generating only inter-sample sequences that can be produced by the concrete system $\Ss_{\MPETC}$. Topologically, it is still a tree, such as $\Ss^{\Sim}$, but with fewer states (see Figure \ref{fig:abstraction}). Our main result follows:
	
	\begin{figure}[tb]
		\begin{center}
			\begin{footnotesize}
			\begin{tikzpicture}[->,>=stealth',shorten >=1pt, auto, node distance=1.5cm,
			semithick, xscale=0.8, yscale=0.9]
			\tikzset{every state/.style={minimum size=2em, inner sep=2pt}}
			
			\node[state] 		 (0)    {$\varepsilon$};
			\node[state]         (1) at ([shift=({135:1.2 cm})]0)  {1};
			\node[state]         (2) at ([shift=({45:1.2 cm})]0) {2};
			\node[state]         (3) at ([shift=({-45:1.2 cm})]0) {3};
			\node[state]         (4) at ([shift=({-135:1.2 cm})]0)  {4};
			\node[state]         (11) at ([shift=({195:1.2 cm})]1)  {1,1};
			\node[state]         (21) at ([shift=({155:1.2 cm})]1)  {2,1};
			\node[state]         (31) at ([shift=({115:1.2 cm})]1)  {3,1};
			\node[state]         (41) at ([shift=({75:1.2 cm})]1)   {4,1};
			\node[state]         (12) at ([shift=({105:1.2 cm})]2)  {1,2};
			\node[state]         (22) at ([shift=({65:1.2 cm})]2)   {2,2};
			\node[state]         (32) at ([shift=({25:1.2 cm})]2)   {3,2};
			\node[state, dashed] (42) at ([shift=({-15:1.2 cm})]2)  {4,2};
			\node[state]         (13) at ([shift=({15:1.2 cm})]3)  {1,3};
			\node[state, dashed] (23) at ([shift=({-25:1.2 cm})]3)  {2,3};
			\node[state]         (33) at ([shift=({-65:1.2 cm})]3)  {3,3};
			\node[state, dashed] (43) at ([shift=({-105:1.2 cm})]3)   {4,3};
			\node[state, dashed] (14) at ([shift=({-75:1.2 cm})]4)  {1,4};
			\node[state, dashed] (24) at ([shift=({-115:1.2 cm})]4)   {2,4};
			\node[state, dashed] (34) at ([shift=({-155:1.2 cm})]4)   {3,4};
			\node[state]         (44) at ([shift=({-195:1.2 cm})]4)  {4,4};
			
			\path (11) edge (1) (1) edge (0)
			(21) edge (1)
			(31) edge (1)
			(41) edge (1)
			(12) edge (2) (2) edge (0)
			(22) edge (2)
			(32) edge (2)
			(42) edge (2)
			(13) edge (3) (3) edge (0)
			(23) edge (3)
			(33) edge (3)
			(43) edge (3)
			(14) edge (4) (4) edge (0)
			(24) edge (4)
			(34) edge (4)
			(44) edge (4)
			(0) edge [loop above] (0);
			\end{tikzpicture} \quad \quad \quad %
			\begin{tikzpicture}[->,>=stealth',shorten >=0pt, auto, node distance=1.0cm,
			semithick, xscale=1.0, yscale=1.0]
			\tikzset{every state/.style={minimum size=2em, inner sep=2pt}}
			
			\node[state] 		 (44)    {4,4};
			\node[state]		 (41) [right of=44] {4,1};
			\node[state]		 (11) [below right of=41] {1,1};
			\node[state]		 (12) [right of=11] {1,2};
			\node[state]		 (13) [below of=11] {1,3};
			\node[state]		 (21) [right of=13] {2,1};
			\node[state]		 (31) [left of=13] {3,1};
			\node[state]		 (33) [below of=13] {3,3};
			\node[state]		 (32) [right of=33] {3,2};
			\node[state]		 (22) [right of=21] {2,2};
			
			\path (44) edge [loop above] (44) (44) edge (41) 
			(41) edge (11) (41) edge [bend right] (13) (41) edge [bend left] (12)
			(11) edge (13) (11) edge (12) (11) edge [loop above] (11)
			(12) edge (21) (12) edge (22)
			(22) edge [loop above] (22) (22) edge (21)
			(31) edge (11) (31) edge (12) (31) edge [bend right] (13)
			(13) edge (31) (13) edge (33)
			(21) edge [bend left] (12) (21) edge (13) (21) edge (11)
			(33) edge (31) (33) edge [loop below] (33) (33) edge (32)
			(32) edge (21) (32) edge (22);			
			\end{tikzpicture}
			\end{footnotesize}
			\caption{\label{fig:abstraction} On the left, an illustration of $\Ss^{\Sim}$ (all states) and $\Ss^{\Bisim}$ (only solid-line states). On the right, a depiction of $\Ss^{\PETC}$ with $\Xs^{\PETC} = \{\sigma \in \Xs^{\Bisim}: |\sigma|=2\}$.}
			\vspace{-1.5em}
		\end{center}
	\end{figure}
	
	\begin{thm}\label{thm:bisim}
		Let Assumptions \ref{assum:stable}--\ref{assum:compact} hold and $N \coloneqq \ceil{\log_a(r)}$. Then, $\Ss^{\MPETC} \approxeq \Ss^{\Bisim}$.
	\end{thm}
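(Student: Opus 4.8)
The plan is to show that the relation $\Rs_{\Bisim}$ of Definition \ref{def:bisimrel} is itself a bisimulation relation between $\Ss_{\MPETC}$ and $\Ss^{\Bisim}$; by the definition of bisimulation it then suffices to prove that $\Rs_{\Bisim}$ is a simulation relation from $\Ss_{\MPETC}$ to $\Ss^{\Bisim}$ and that $\Rs_{\Bisim}^{-1}$ is a simulation relation in the other direction. The backbone of the argument is that $\Rs_{\Bisim}$ is in fact a \emph{total function} from $\Xs$ onto $\Xs^{\Bisim}$: every $\xv \in \Xs$ is related to exactly one $\sigma \in \KN$. Totality and single-valuedness follow from three ingredients already available: (i) the dynamics \eqref{eq:mixedtiming} are deterministic, so each $\xv$ has a unique sampled trajectory; (ii) by \eqref{eq:setq} the sets $\{\Qs_k\}_{k\in K}$ partition $\Xs$, so each sampled state triggers with a unique discrete inter-event time $\kappa(\cdot)$; and (iii) by the argument of Lemma \ref{lem:finite}, every trajectory enters $\Xs_\periodic$ within $N$ steps, so the generated sequence has length at most $N$ and therefore lies in $\KN$. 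The conditions \eqref{eq:sequence} then read precisely as ``$\sigma = k_1\dots k_m$ is the sequence of discrete inter-event times produced from $\xv$ up to, and including, the first sample that lands in $\Xs_\periodic$'', while \eqref{eq:Xempty} is the degenerate case $\xv \in \Xs_\periodic$.

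Given this, conditions (i) and (ii) of Definition \ref{def:sim} are immediate in both directions. Since $\Xs_0 = \Xs$ and $\Xs^{\Bisim}$ is by Definition \ref{def:bisim} exactly the image $\pi_{\Rs_{\Bisim}}(\Xs)$, every concrete initial state is related to some abstract state (totality) and every abstract state is related to some concrete initial state (by definition of the projection), which discharges the initial-state conditions. Output consistency holds because, whenever $(\xv, k_1\dots k_m)\in\Rs_{\Bisim}$, condition \eqref{eq:sequence_k1} gives $\xv \in \Qs_{k_1}$, hence $\kappa(\xv) = k_1$ and $H_{\MPETC}(\xv) = hk_1 = H^{\Sim}(k_1\dots k_m)$; for $\sigma = \varepsilon$ we have $\xv\in\Xs_\periodic$ and both outputs equal $h_\periodic$.

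The substance of the proof is the transition condition (iii), and here the key step is a \emph{shift-invariance} property of $\Rs_{\Bisim}$: if $(\xv, k_1 k_2\dots k_m)\in\Rs_{\Bisim}$ with $m\ge 1$, then $(\Mm(k_1)\xv,\, k_2\dots k_m)\in\Rs_{\Bisim}$, interpreting the tail as $\varepsilon$ when $m=1$. This is verified by re-indexing the four blocks of \eqref{eq:sequence}: the membership constraints \eqref{eq:sequence_k1}--\eqref{eq:sequence_km-1} and the not-yet-periodic constraints \eqref{eq:sequence_x0_not_periodic}--\eqref{eq:sequence_xm-1_not_periodic} for the advanced state $\Mm(k_1)\xv$ are exactly those for $\xv$ with the first entry dropped, using $\Mm(k_j)\cdots\Mm(k_2)(\Mm(k_1)\xv) = \Mm(k_j)\cdots\Mm(k_1)\xv$; and $\Mm(k_1)\xv \in \Xs_0$ because $V(\Mm(k_1)\xv)\le a V(\xv)\le V_0$ by Assumption \ref{assum:stable}. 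With this in hand, (iii) holds for $\Rs_{\Bisim}$: the unique concrete successor of $\xv\notin\Xs_\periodic$ is $\xv' = \xiv_\xv(h\kappa(\xv)) = \Mm(k_1)\xv$, and $\Es^{\Sim}$ maps $k_1\dots k_m$ to its tail $k_2\dots k_m$, which shift-invariance relates to $\xv'$; the case $\xv\in\Xs_\periodic$ (with $\sigma=\varepsilon$) is handled by the self-loop together with forward invariance of $\Xs_\periodic$ (Assumption \ref{assum:periodicstable}). For the reverse simulation with $\Rs_{\Bisim}^{-1}$ the same computation applies: both $\Ss_{\MPETC}$ and $\Ss^{\Bisim}$ are deterministic, so given $(\sigma,\sigma')\in\Es^{\Sim}$ and a related concrete state $\xv$, the unique concrete successor $\xv'$ is forced, and shift-invariance again yields $(\xv',\sigma')\in\Rs_{\Bisim}$.

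I expect the main obstacle to be the careful bookkeeping in the shift-invariance step: matching each of the four families of constraints in \eqref{eq:sequence} after dropping $k_1$ and advancing the state by $\Mm(k_1)$, and in particular confirming that the boundary conditions \eqref{eq:sequence_xm-1_not_periodic}--\eqref{eq:sequence_km} (the last sample being the first to enter $\Xs_\periodic$) transfer correctly, including the degenerate transition into $\varepsilon$ when $m=1$. Everything else reduces to determinism of the two systems and the finite-reach bound of Lemma \ref{lem:finite}.
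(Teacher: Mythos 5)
Your proposal is correct and follows essentially the same route as the paper's proof: establishing that $\Rs_{\Bisim}$ is a simulation from $\Ss_{\MPETC}$ to $\Ss^{\Bisim}$ and $\Rs_{\Bisim}^{-1}$ one in the reverse direction, with the transition condition discharged by exactly the tail/shift argument on Eq.~\eqref{eq:sequence} (the paper phrases your ``shift-invariance'' as: $\Mm(k_1)\xv$ satisfies Eqs.~\eqref{eq:sequence_k2}--\eqref{eq:sequence_km} and \eqref{eq:sequence_x1_not_periodic}--\eqref{eq:sequence_xm-1_not_periodic}, hence is related to $\sigma'$), the $\varepsilon$ self-loop handled via Assumption~\ref{assum:periodicstable}, and the reverse direction reduced to the forward one by determinism. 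Your explicit observation that $\Rs_{\Bisim}$ is a total function on $\Xs$ is a clean way to package conditions (i)--(ii), but it does not change the substance of the argument.
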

	\begin{proof}
		We show that $\Rs_{\Bisim}$ is a simulation relation from $\Ss_{\MPETC}$ to $\Ss^{\Bisim}$ and $\Rs^{-1}_{\Bisim}$ is a simulation relation from $\Ss^{\Bisim}$ to $\Ss_{\MPETC}$, checking each of the conditions of Definition \ref{def:sim}.
		
		{\bf Step 1:} $\Rs_{\Bisim}$ is a simulation relation from $\Ss_{\MPETC}$ to $\Ss^{\Bisim}$.
		
		For condition (i), take a point $\xv_0 \in \Xs_0 = \Xs$. It either belongs to $\Xs_{\periodic}$, for which Eq.~\eqref{eq:Xempty} provides its related state; or it takes $m$ PETC steps to reach $\Xs_{\periodic}$. In this latter case, it generates some trace $hk_1, hk_2, ..., hk_m, h^*, h^*, ...$ and therefore, by definition, it satisfies Eq.~\eqref{eq:sequence}. Hence, the related state $\dummy{k_1k_2...k_m}$ belongs to $\Xs^{\Bisim}.$
		Condition (ii) trivially holds by the definition of $\Rs_{\Bisim}$, and in particular by \eqref{eq:Xempty} and \eqref{eq:sequence_k1}.
		
		Finally, for condition (iii), take $(\xv, \sigma) \in \Rs_{\Bisim}.$ If $\xv \in \Xs_{\periodic}$, then $\sigma = \varepsilon$. From Assumption \ref{assum:periodicstable}, $\Post_{\Ss_{\MPETC}}(\xv) \in \Xs_{\periodic}$, which is related to $\varepsilon = \Post_{\Ss^\Bisim}(\varepsilon)$. If $\xv \notin \Xs_{\periodic}$, then $\sigma=k_1\sigma' \in \KN$. Therefore, $\Post_{\Ss_{\MPETC}}(\xv) = \Mm(k_1)\xv.$ From Assumption \ref{assum:stable}, $\Mm(k_1)\xv \in \Xs_0$; also, by inspecting Eq.~\eqref{eq:sequence}, $\Mm(k_1)\xv$ satisfies Eqs.~\eqref{eq:sequence_k2}--\eqref{eq:sequence_km} and Eqs.~\eqref{eq:sequence_x1_not_periodic}--\eqref{eq:sequence_xm-1_not_periodic}: this implies that $k_2...k_m = \sigma' = \Post_{\Ss^\Bisim}(k_1\sigma')$ is related to $\Mm(k_1)\xv$. 
		
		{\bf Step 2:} $\Rs^{-1}_{\Bisim}$ is a simulation relation from $\Ss^{\Bisim}$ to $\Ss_{\MPETC}$.
		
		For condition (i), if $\dummy{k_1k_2...k_m} \in \Xs_0^{\Bisim}$, then there exists a related initial state $\xv_0$ which satisfies Eq.~\eqref{eq:sequence}; hence, from Eq.~\eqref{eq:sequence_x0}, $\xv_0 \in \Xs_0$. For $\dummy\varepsilon$, any related state $\xv_0$ belongs to $\Xs_{\periodic} \subset \Xs = \Xs_0$.
		Condition (ii) is the same as in Step 1.
		Finally, condition (iii) is verified because the reasoning in Step 1 applies to every $\xv \in \Xs$ satisfying $(\sigma, \xv) \in \Rs^{-1}_{\Bisim}\!$.
	\end{proof}
	
	\begin{rem}\label{rem:decidable}
		Determining %
		if there exists $\xv$ satisfying Eq.~\eqref{eq:sequence} 
		 is a problem of checking non-emptiness of a semi-algebraic set, which has been proven to be decidable \cite{basu2006}. One tool that can be used to solve it is the SMT solver Z3 \cite{demoura2008z3}.
	\end{rem}
	
	\begin{prop}[Complexity]
		The state set $\Xs^{\Bisim}$ can be computed with $$\bigO\left(|K|^N (N|K|)^{\nx}\right)\cdot2^{\bigO(\nx)}$$ operations.
	\end{prop}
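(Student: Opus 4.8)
The plan is to bound the total cost as (the number of candidate sequences) times (the cost of deciding, for each, whether it belongs to $\Xs^{\Bisim}$). First I would count the candidates. By Definition~\ref{def:bisim}, $\Xs^{\Bisim} = \pi_{\Rs_{\Bisim}}(\Xs) \subseteq \KN$, and $\KN = \bigcup_{i=1}^N K^i \cup \{\varepsilon\}$ has $\left(\sum_{i=1}^N |K|^i\right) + 1 = \bigO(|K|^N)$ elements. Computing $\Xs^{\Bisim}$ therefore reduces to deciding, for each $\sigma \in \KN$, whether there is an $\xv$ with $(\xv,\sigma) \in \Rs_{\Bisim}$; the case $\sigma = \varepsilon$ is immediate, and every other case is the feasibility of the system~\eqref{eq:sequence}.

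Second, I would quantify the cost of a single such decision. Fix a sequence of length $m \leq N$. As noted in Remark~\ref{rem:decidable}, \eqref{eq:sequence} is a conjunction of polynomial sign conditions in the $\nx$ real unknowns collected in $\xv$, so its feasibility is a non-emptiness test for a semi-algebraic set. Two facts must be pinned down here: the degree and the number of polynomials. Every constraint is \emph{quadratic} in $\xv$ --- the $\Xs_0$- and $\Xs_{\periodic}$-membership conditions are sublevel sets of $V$ evaluated at $\Mm(k_j)\cdots\Mm(k_1)\xv$, and, through~\eqref{eq:setq}, each $\Qs_{k_j}$-membership unfolds into at most $|K|$ inequalities of the form $(\,\cdot\,)\tran\Nm(k)(\,\cdot\,)>0$ (or its negation). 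Summing the at most $|K|$ inequalities at each of the $m$ positions together with the $\bigO(m)$ sublevel conditions gives $s = \bigO(N|K|)$ polynomials of degree $d = 2$ in $n = \nx$ variables.

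Third, I would invoke the decidability result cited in Remark~\ref{rem:decidable}: non-emptiness of a semi-algebraic set cut out by $s$ polynomials of degree at most $d$ in $n$ variables can be decided with $s^{n+1} d^{\bigO(n)}$ arithmetic operations~\cite{basu2006}, i.e.\ singly exponentially in $n$. With $d = 2$ the degree factor is $2^{\bigO(\nx)}$, and with $s = \bigO(N|K|)$, $n = \nx$ the remaining factor is of order $(N|K|)^{\nx}$, so each decision costs $(N|K|)^{\nx}\cdot 2^{\bigO(\nx)}$. The set-up work per instance --- forming the products $\Mm(k_{j})\cdots\Mm(k_1)$ incrementally and assembling the quadratic forms --- is only $\mathrm{poly}(\nx, N)$ and is dominated by this. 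Multiplying the $\bigO(|K|^N)$ candidates by the per-candidate cost yields the claimed $\bigO\!\left(|K|^N (N|K|)^{\nx}\right)\cdot 2^{\bigO(\nx)}$.

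Finally, I would add that the recursive generation mentioned earlier does not worsen this estimate: an infeasible prefix $k_1 \cdots k_j$ prunes its whole subtree, so in typical cases far fewer than $|K|^N$ checks are performed, while in the worst case the tree $\Ss^{\Bisim}\subseteq\Ss^{\Sim}$ still has $\bigO(|K|^N)$ nodes. The step I expect to be the crux is the second one: the bookkeeping that the number of sign conditions per sequence is $\bigO(N|K|)$ (it grows with the length $m$, rather than being constant) and that all polynomials remain quadratic, since this is exactly what lets the real-algebraic-geometry bound contribute only the constant-degree factor $2^{\bigO(\nx)}$ while the polynomial count enters as $(N|K|)^{\nx}$. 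Matching the stated exponent $\nx$ exactly (rather than $\nx+1$) hinges on which form of the singly-exponential decision bound one adopts, so I would be careful to fold the lower-order $N|K|$ overhead consistently into the $\bigO$.
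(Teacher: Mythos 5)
Your proof is correct and takes essentially the same route as the paper's: bound the candidates by $\bigO(|K|^N)$, count $\bigO(N|K|)$ degree-2 sign conditions per sequence of length $m\leq N$, invoke the singly-exponential real-feasibility bound $s^{\nx+1}d^{\bigO(\nx)}$, and multiply. Your closing caveat about matching the exponent $\nx$ rather than $\nx+1$ is well placed: the paper's own proof likewise obtains $(N|K|)^{\nx+1}$ from $s^{\nx+1}$ and simply absorbs the extra $N|K|$ factor when ``working out the limits for big-O notation.''
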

	
	\begin{proof}
		From Lemma \ref{lem:finite}, we have seen that there can be at most $$|K|(\left(|K|-1\right)^N-1)(|K|-1)^{-1}\in \bigO(|K|^N)$$ sampling sequences. Determining the state set $\Xs^{\Bisim}$ requires, in the worst case, checking the existence of all of those sequences. For a sequence of length $m$, Eq.~\eqref{eq:sequence} has one membership in $\Xs_0$ and $m$ memberships in $\Xs_{\periodic}$, each corresponding to one quadratic inequality; and $m$ memberships in $\Qs_k$, each corresponding to $k-\kmin+1$ quadratic inequalities. Therefore, in the worst case, Eq.~\eqref{eq:sequence} has $m+1+m|K|$ inequalities, or $1+N+N|K| \in \bigO(N|K|)$ for the longest sequence. The best known bound for deciding the existence of a real solution to a conjunction of $s$ polynomial inequalities of $\nx$ variables and maximum degree $d$ is $s^{\nx+1}d^{\bigO(\nx)}$  \cite{basu1996combinatorial}. Replacing $s$ by $1+N+N|K|$ and $d$ by 2, multiplying by the number of checks and working out the limits for big-O notation concludes the proof.
	\end{proof}
	
	\begin{rem}\label{rem:properalgorithm} While all sequences of length $N$ must be checked in the worst case, for other cases it is more efficient to employ a recursive algorithm, i.e., verifying Eq.~\eqref{eq:sequence} for sequences from length 1 to $N$. If a sequence $\sigma$ shorter than $N$ does not verify Eq.~\eqref{eq:sequence}, then no sequence $k\sigma$ can do. Hence, many checks can be discarded using this simple observation.
	\end{rem}
	
	\begin{rem}\label{rem:homo} Due to characteristics of the inequalities associated to Definition \ref{def:bisimrel},  one can set $V_0 = 1$ without loss of generality, with the only input to the model being the ratio of contraction $r$. For $V_0 = c > 0$, the model is the same: replace $\xv$ by $\sqrt{c}\xv$ in Eq.~\eqref{eq:sequence}, and $\sqrt{c}$ can be canceled out. \end{rem}
	
	\subsection{Derived results for the original PETC}
	
	With a few changes to $\Ss^{\Bisim}$, we can build a similar model of the PETC traffic that generates fewer spurious traces than, e.g., \cite{gleizer2020scalable}. This is because the PETC section of the MPETC trace is of course generated by the pure PETC system \eqref{eq:plant}--\eqref{eq:petc_time}. Hence, to simulate the PETC traffic, one could do the following: for a given state $\xv\in\R^\nx$, take $V_0 = V(\xv)$ and determine its related state $\dummy{k\sigma}$ from Eq.~\eqref{eq:sequence}. Now take its successor $\Mm(k)\xv$. Again, set $V_0 = V(\Mm(k)\xv)$ and determine its related state: it has to take the form $\dummy{\sigma\sigma'}$, i.e., its first inter-sample times should be all but the first inter-sample times of its predecessor.
	This idea is depicted in Fig.~\ref{fig:idea_PETC}.
	\begin{figure}
		\begin{center}
			\input{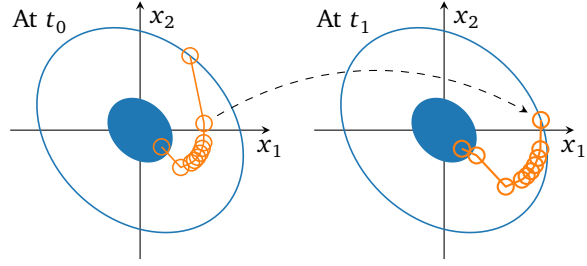}
			\caption{\label{fig:idea_PETC} Depiction of the strategy used to build the PETC traffic abstraction: the trajectory $\hat{\xiv}(t)$ is in orange with samples marked, and $\Xs_{\periodic}$ is the blue ellipse. The tail of the sequence from $t_0=0$ is the head of the sequence from the following sample $t_1$.}
			\vspace{-1.5em}
		\end{center}
	\end{figure}
	Let us formalize this procedure.
	\begin{defn}[PETC inter-sample sequence relation]\label{def:petcrel} Let $V_0 = 1.$ The relation $\Rs_{\PETC} \subseteq \R^{\nx} \times \KN$ is given by: $(\xv, \dummy{k_1k_2...k_m}) \in \Rs_{\PETC}$ iff $\xv/\sqrt{V(\xv)}$ satisfies Eq.~\eqref{eq:sequence}.
	\end{defn}
	\begin{thm}
		Let Assumption \ref{assum:stable} hold. Then, the system $$\Ss_{\PETC} \coloneqq (\Xs^{\PETC}, \Xs^{\PETC}, \emptyset, \Es^{\PETC}, \Ys_{\MPETC}, H^{\Sim}),$$ with $\Xs^{\PETC} \coloneqq \pi_{\Rs_{\PETC}}(\R^{\nx})$ and 
		$\Es^{\PETC} = \{(\dummy{k\sigma}, \dummy{\sigma\sigma'}) | \dummy{k\sigma}, \dummy{\sigma\sigma'} \in \Xs^{\PETC}\},$
		simulates the traffic generated by System \eqref{eq:plant}--\eqref{eq:quadtrig}.
	\end{thm}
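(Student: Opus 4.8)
The plan is to exhibit $\Rs_{\PETC}$ itself as a simulation relation from the concrete PETC traffic system to $\Ss_{\PETC}$, verifying the three conditions of Definition \ref{def:sim}. First I would make the simulated system explicit: over the state set $\R^\nx$ (the origin being a degenerate fixed point emitting the constant trace $\kmax h, \kmax h, \dots$, which we exclude), it is the deterministic system with transition $\xv \mapsto \Mm(\kappa(\xv))\xv$, output $\xv \mapsto h\kappa(\xv)$, and the same output alphabet $\Ys_{\MPETC}$.

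The structural fact underpinning the argument is the \emph{scale-invariance} of the discrete inter-sample time: each $\Nm(k)$ is a quadratic form, so $\kappa(\lambda\xv) = \kappa(\xv)$ for every $\lambda \neq 0$, i.e.~the sets $\Qs_k$ are cones. Taking $V_0 = 1$ (Remark \ref{rem:homo}), the normalized state $\xv_s \coloneqq \xv/\sqrt{V(\xv)}$ satisfies $V(\xv_s) = 1$, hence lies on the boundary of $\Xs_0$ and outside $\Xs_\periodic = \{\xv : V(\xv) \le r\}$; by Assumption \ref{assum:stable} it reaches $\Xs_\periodic$ in at most $N$ steps, so it generates a unique sequence $k_1\sigma \in \KN$ with $\sigma = k_2\dots k_m$. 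This settles condition (i), and condition (ii) is immediate from scale-invariance, since the concrete output $h\kappa(\xv) = h\kappa(\xv_s) = hk_1 = H^{\Sim}(k_1\sigma)$ by \eqref{eq:sequence_k1}.

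The crux is condition (iii). The unique concrete successor of such an $\xv$ is $\xv' = \Mm(k_1)\xv$, whose normalization is $\xv'_s = \alpha\,\Mm(k_1)\xv_s$ with $\alpha \coloneqq \sqrt{V(\xv)/V(\xv')}$; Assumption \ref{assum:stable} forces $\alpha \ge a^{-1/2} > 1$. By scale-invariance the inter-sample sequence emitted by $\xv'_s$ begins exactly with $k_2, \dots, k_m$, because $\Mm(k_j)\cdots\Mm(k_1)\xv_s \in \Qs_{k_{j+1}}$ by \eqref{eq:sequence_k2}--\eqref{eq:sequence_km-1} and $\Qs_{k_{j+1}}$ is a cone. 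The delicate point is that $\Xs_\periodic$ is a \emph{sublevel set}, not a cone: for $1 \le j \le m-1$, after $j-1$ emissions the state of $\xv'_s$ equals $\alpha\,\Mm(k_j)\cdots\Mm(k_1)\xv_s$, whose Lyapunov value is $\alpha^2 V(\Mm(k_j)\cdots\Mm(k_1)\xv_s) > \alpha^2 r > r$ by \eqref{eq:sequence_x1_not_periodic}--\eqref{eq:sequence_xm-1_not_periodic} together with $\alpha > 1$. Hence $\xv'_s$ cannot enter $\Xs_\periodic$ before emitting all of $\sigma$, and it therefore generates a sequence $\sigma\sigma'$ with a (possibly empty) extension $\sigma'$ and $|\sigma\sigma'| \le N$ (again by Assumption \ref{assum:stable}). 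Thus $\sigma\sigma' \in \Xs^{\PETC}$, $(\xv', \sigma\sigma') \in \Rs_{\PETC}$, and $(k_1\sigma, \sigma\sigma') \in \Es^{\PETC}$ by construction, which is exactly condition (iii).

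I expect the main obstacle to be precisely this last step. Because the contraction factor satisfies $\alpha > 1$, normalizing the successor \emph{inflates} its trajectory away from $\Xs_\periodic$, so its sampling sequence can be strictly longer than the tail $\sigma$ of its predecessor's sequence. This asymmetry is the reason only a simulation—rather than a bisimulation—is obtained, and why $\Es^{\PETC}$ is defined to permit arbitrary continuations $\sigma'$; most of the care in the write-up goes into verifying that the prefix $\sigma$ is always emitted in full before $\Xs_\periodic$ is reached, which is where the cone property of the $\Qs_k$ and the strict sublevel-set nature of $\Xs_\periodic$ must be used together.
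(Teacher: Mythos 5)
Your proposal is correct and follows essentially the same route as the paper's proof: you exhibit $\Rs_{\PETC}$ itself as the simulation relation, use homogeneity (the cone property) of the sets $\Qs_k$ to transfer the triggering conditions \eqref{eq:sequence_k2}--\eqref{eq:sequence_km-1} to the normalized successor, and use the Lyapunov decrease of Assumption \ref{assum:stable} (your factor $\alpha>1$ is the paper's inequality $rV(\xv)>rV(\xv')$) to show the successor cannot enter $\Xs_\periodic$ before emitting the tail $\sigma$, so its related sequence has $\sigma$ as a prefix and the edge $(\dummy{k\sigma},\dummy{\sigma\sigma'})\in\Es^{\PETC}$ exists. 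The only differences are presentational (explicit treatment of the origin and of the scaling factor), not substantive.
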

	\begin{proof}
		\rev{Take an initial state $\xv\in\R^\nx$,} a PETC trajectory $\xiv_{\xv}(t)$ and its associated inter-sample sequence $k_1k_2...k_m$, after which $V(\xiv_{\xv}(t_m)) \leq rV(\xv)$ but $V(\xiv_{\xv}(t_{m-1})) > rV(\xv)$. This implies that $\xv/\sqrt{V(\xv)}$ satisfies Eq.~\eqref{eq:sequence}. Hence, $(\xv/\sqrt{V(\xv)}, \dummy{k_1k_2...k_m}) \in \Rs_{\PETC}$, and condition (i) of Def.~\ref{def:sim} holds. Condition (ii) is trivially satisfied, as $H(\xv) = hk_1 = H^{\Sim}(k_1k_2...k_m)$. For condition (iii), take its related sequence $k_1k_2...k_m$. The successor of $\xv$ is $\xv' \coloneqq \xiv_{\xv}(hk_1) = \Mm(k_1)\xv$, which satisfies Eqs.~\eqref{eq:sequence_k2}--\eqref{eq:sequence_km} and Eqs.~\eqref{eq:sequence_x1_not_periodic}--\eqref{eq:sequence_xm-1_not_periodic}; from homogeneity of $\Qs_k$, $\Mm(k_1)\xv/\sqrt{V(\Mm(k_1)\xv)}$ also satisfies Eqs.~\eqref{eq:sequence_k2}--\eqref{eq:sequence_km-1}. Additionally, because of Assumption \ref{assum:stable}, we have that $V(\xv') < V(\xv)$; hence, Eqs.~\eqref{eq:sequence_x1_not_periodic}--\eqref{eq:sequence_xm-1_not_periodic} holding for $\xv/\sqrt{V(\xv)}$ imply that $V(\xiv_{\xv}(t_i)) > rV(\xv) > rV(\xv')$ for all $1 \leq i \leq m$, and therefore Eqs.~\eqref{eq:sequence_x1_not_periodic}--\eqref{eq:sequence_xm-1_not_periodic} also hold for $\xv'/\sqrt{V(\xv')}$. This shows that the prefix of the sequence related to $\xv'$ is $k_2...k_m$. Finally, $\xv'/\sqrt{V(\xv')}$ satisfies Eq.~\eqref{eq:sequence} for some sequence in $\Ss_{\PETC}$; combining with the conclusion about the prefix above, $(\xv',\sigma\sigma') \in \Rs_{\PETC}$. The related transition exists because $(k\sigma, \sigma\sigma') \in \Es^{\PETC}$ for every $\sigma\sigma' \in \Xs^{\PETC}$.
	\end{proof}
	
	Note that $\Ss_{\PETC}$ is, in general, nondeterministic. A depiction of such construction is seen in Fig.~\ref{fig:abstraction}. Some useful verification applications can be derived from the model $\Ss_{\PETC}$:
	
	\begin{prop}\label{prop:freq} An upper bound for the average triggering frequency \rev{of system \eqref{eq:plant}, \eqref{eq:quadtrig}} is $f^* = \max_{\sigma\in\Xs^{\PETC}}(|\sigma|/(h{\sum_{k_i\in\sigma}k_i}))$. \end{prop}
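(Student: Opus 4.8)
The plan is to show that \emph{every} PETC trajectory has long-run triggering frequency at most $f^*$, from which the bound on the system's average frequency follows immediately. The central device is the self-similar block structure already exploited in Definition~\ref{def:petcrel} and Figure~\ref{fig:idea_PETC}: I partition each trajectory into consecutive, non-overlapping segments, each being exactly one ``descent'' of the Lyapunov function by the factor $r$, and observe that each such segment is precisely encoded by a state of $\Ss_{\PETC}$, whose sample-density is by construction bounded by $f^*$.

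First I would fix an initial state $\xv\in\R^\nx$ (the case $\xv=\O$ is trivial, triggering at $\bar k$) together with its PETC sample times $t_0=0,t_1,\dots$, and define block boundaries recursively: set $\xv^{(0)}\coloneqq\xv$ and, given $\xv^{(j)}$, let $\sigma^{(j)}$ be the inter-sample sequence generated from $\xv^{(j)}$ until the first sampled state with $V\le rV(\xv^{(j)})$, which I name $\xv^{(j+1)}$. Assumption~\ref{assum:stable} gives $V(\xiv(t_{i+1}))\le aV(\xiv(t_i))$ with $a<1$, so each block closes within at most $N=\ceil{\log_a r}$ samples and, since $V\to 0$, infinitely many blocks tile the whole time axis. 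Using the homogeneity of Remark~\ref{rem:homo}, normalizing by $\sqrt{V(\xv^{(j)})}$ shows that $\xv^{(j)}$ satisfies Eq.~\eqref{eq:sequence}; hence $(\xv^{(j)},\sigma^{(j)})\in\Rs_{\PETC}$ and $\sigma^{(j)}\in\Xs^{\PETC}$.

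The second step is the counting. Block $j$ contains exactly $m_j\coloneqq|\sigma^{(j)}|$ triggers and spans continuous time $D_j\coloneqq h\sum_{k\in\sigma^{(j)}}k$, so the block-boundary times $T_J\coloneqq\sum_{j<J}D_j$ partition $[0,\infty)$ into intervals $[T_J,T_{J+1})$, and the number of triggers in $[0,T_J)$ equals $\sum_{j<J}m_j$. By the very definition of $f^*$ we have $m_j\le f^* D_j$ for each $j$; summing these inequalities (the mediant inequality) yields $\sum_{j<J}m_j\le f^* T_J$. For an arbitrary horizon $T\in[T_J,T_{J+1})$ the trigger count is at most $\sum_{j\le J}m_j\le f^* T_J+m_J$ with $m_J\le N$, so
\begin{equation*}
\frac{\#\{\text{triggers in } [0,T)\}}{T}\le\frac{f^* T_J+N}{T_J}=f^*+\frac{N}{T_J}\xrightarrow[J\to\infty]{}f^*.
\end{equation*}
Taking the limit superior over $T$ bounds this trajectory's average triggering frequency by $f^*$; as the trajectory was arbitrary, $f^*$ bounds the average frequency of system~\eqref{eq:plant},~\eqref{eq:quadtrig}.

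I expect the difficulties to be bookkeeping rather than conceptual. The main obstacle is to argue cleanly that the blocks genuinely tile the time axis without double-counting triggers---that the trigger initiating block $j{+}1$ is the sampled state $\xv^{(j+1)}$ shared with the end of block $j$ and is counted exactly once---and to bound the \emph{continuous-time} limit (not merely the limit along block boundaries) by handling the partial final block. The uniform bound $m_J\le N$ furnished by Assumption~\ref{assum:stable} is precisely what forces the correction term $N/T_J$ to vanish.
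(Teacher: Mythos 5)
Your proof is correct, and it amounts to a rigorous completion of what the paper dispatches in a single sentence: ``in the worst case, $\Ss_{\PETC}$ generates $\sigma^* \coloneqq \arg\max_{\sigma\in\Xs^{\PETC}}(|\sigma|/(h\sum_{k_i\in\sigma}k_i))$ repeatedly.'' The paper's route leans on the preceding simulation theorem---every concrete trace is a trace of $\Ss_{\PETC}$, and the highest-frequency behavior of the abstraction is the periodic repetition of $\sigma^*$---while leaving implicit both why that repetition is the worst case among abstraction traces and how an asymptotic frequency bound on the concrete system follows. You instead argue directly on concrete trajectories: tile each trajectory into Lyapunov-descent blocks, use homogeneity (Definition~\ref{def:petcrel}, Remark~\ref{rem:homo}) to place each block's inter-sample sequence in $\Xs^{\PETC}$, bound each block's sample density by $f^*$, and pass to the limit via the mediant inequality plus the partial-block correction $N/T_J \to 0$. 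What your route buys is precision: it never needs the transition relation $\Es^{\PETC}$ or the simulation theorem itself, it handles the final incomplete block, the degenerate case $\xv = \O$, and the fact that block durations are bounded above ($\leq N\bar{k}h$) and below ($\geq h\underline{k}$) so that $T\to\infty$ forces $J\to\infty$ and $T_J\to\infty$. What the paper's route buys is brevity and a reusable principle (the worst-case trace of a simulating abstraction bounds the concrete worst case). Note also that your block-tiling device is exactly the one the paper uses in Case 2 of the proof of Proposition~\ref{prop:ges}, so your argument stays entirely within the paper's toolkit while making its frequency claim airtight.
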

	
	\begin{proof} In the worst case, $\Ss_{\PETC}$ generates $\sigma^* \coloneqq \arg\!\max_{\sigma\in\Xs^{\PETC}}(|\sigma|/(h{\sum_{k_i\in\sigma}k_i}))$ repeatedly.
	\end{proof}
	
	\begin{prop}\label{prop:ges} Let $T^* = h\max_{\sigma\in\Xs^{\PETC}}(\sum_{k_i\in\sigma}(k_i))$ be the longest (time-wise) sequence in $\Xs^{\PETC}$. Then $b^*\! =\! -\log(r)/2T^*$ is an upper bound for the GES decay rate of system \eqref{eq:plant},\eqref{eq:quadtrig}.\end{prop}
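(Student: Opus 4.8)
The plan is to recast the proposition as a \emph{uniform per-epoch contraction} of $V$ and then iterate it, the ``epoch'' being the time the trajectory needs to shrink $V$ by the factor $r$. \emph{Step 1 (one epoch).} Fix an arbitrary $\xv\in\R^\nx$ and set $\bar\xv \coloneqq \xv/\sqrt{V(\xv)}$, so $V(\bar\xv)=1=V_0$. By the simulation theorem for $\Ss_{\PETC}$, $\bar\xv$ is related through $\Rs_{\PETC}$ to some $\sigma = k_1\dots k_m \in \Xs^{\PETC}$, which by Definition \ref{def:petcrel} means $\bar\xv$ satisfies Eq.~\eqref{eq:sequence}; in particular Eq.~\eqref{eq:sequence_km} yields $V\big(\Mm(k_m)\cdots\Mm(k_1)\bar\xv\big)\le rV_0 = r$ at the sample instant $\tau(\xv)\coloneqq h\sum_{i=1}^m k_i$, with $\tau(\xv)\le T^*$ by the very definition of $T^*$. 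By homogeneity (Remark \ref{rem:homo}) this rescales back to $\xv$, giving $V(\xiv_{\xv}(\tau(\xv)))\le rV(\xv)$ with $\tau(\xv)\le T^*$ for \emph{every} $\xv$.

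\emph{Step 2 (iteration).} Define the epoch map $\xv\mapsto\xv^{+}\coloneqq\xiv_{\xv}(\tau(\xv))$, which contracts $V$ by at least $r$ over a sample horizon of length at most $T^*$. Iterating from $\xv_0$ produces $V(\xv_n)\le r^{n}V(\xv_0)$, the $n$-th epoch boundary occurring at a time $\sum_{j<n}\tau(\xv_j)\le nT^*$. By Assumption \ref{assum:stable}, $V$ is non-increasing at \emph{all} sample times, so once the trajectory enters $\{\xv: V(\xv)\le r^{n}V(\xv_0)\}$ it stays there at subsequent samples; consequently, by any time $t$ at least $\lfloor t/T^*\rfloor$ epochs are complete, and at the last sample not exceeding $t$ one has $V(\xiv_{\xv_0}(t))\le r^{\lfloor t/T^*\rfloor}V(\xv_0)$. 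This is exponential decay of $V$ at rate $-\log(r)/T^*$.

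\emph{Step 3 (continuous-time GES bound).} Passing from this sample-time decay to a bound on $|\xiv|$ introduces the spectral factor $\sqrt{\lambda_{\max}(\Pm)/\lambda_{\min}(\Pm)}$ through $\lambda_{\min}(\Pm)|\xv|^2\le V(\xv)\le\lambda_{\max}(\Pm)|\xv|^2$, and within one epoch (duration at most $T^*$) the flow of \eqref{eq:plant} composed with finitely many maps $\Mm(k)$, $k\in K$, inflates $|\xiv|$ by at most a finite constant $L$, since $\lVert\e^{\Am s}\rVert$ is bounded for $s\in[0,T^*]$ and $K$ is finite; crucially $L$ is independent of the epoch index by homogeneity. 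Absorbing both factors into a single $M$ gives $|\xiv(t)|\le M\e^{-b^*t}|\xiv(0)|$ with $b^*=-\log(r)/(2T^*)$, which is exactly the asserted decay-rate bound.

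The main obstacle is Step 3: the contraction furnished by $\Rs_{\PETC}$ is anchored at \emph{sample} instants, whereas between samples $V$ and $|\xiv|$ may transiently grow, so the clean per-epoch factor $r$ must be combined with a uniform inter-sample growth bound over the bounded horizon $[0,T^*]$ to produce the finite overshoot constant $M$; pinning down these constants and verifying their independence of the epoch index (which is where homogeneity is essential) is the delicate part. A secondary point worth flagging is the \emph{direction} of the claim: the construction above certifies that system \eqref{eq:plant},\eqref{eq:quadtrig} is GES with the guaranteed rate $b^*$, and it is the worst sequence $\sigma^*=\arg\max_{\sigma\in\Xs^{\PETC}}h\sum_{k_i\in\sigma}k_i$ that obstructs certifying any larger rate from $\Ss_{\PETC}$; making that optimality a formal converse would additionally require exhibiting a recurrent, concretely realizable worst-case trajectory, appealing to the finiteness of $\Xs^{\PETC}$ together with homogeneity.
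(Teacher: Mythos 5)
Your proposal is correct, and it shares the paper's core decomposition: every state contracts $V$ by the factor $r$ within one ``epoch'' of duration at most $T^*$ (this is exactly what the related sequence in $\Xs^{\PETC}$, i.e.\ Eq.~\eqref{eq:sequence_km} plus homogeneity, provides), and iterating this contraction gives decay at rate $2b^*$ for $V$, hence $b^*$ for $|\xiv|$ via the eigenvalue bounds of $\Pm$. Where you genuinely depart from the paper is in how times that are not epoch boundaries are handled. The paper invokes the GES constants $M,b$ furnished by Assumption~\ref{assum:stable} and stitches them to the epoch contraction (its Case~1, Eq.~\eqref{eq:t<T}) to produce the overshoot constant $M'=M\e^{2bT^*}$; its Case~2 is your iteration step, with the leftover partial epoch deferred to Case~1. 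You never touch those constants: you use only the per-sample monotonicity $V(\xiv(t_{i+1}))\le aV(\xiv(t_i))$ of Assumption~\ref{assum:stable} to propagate sublevel-set membership across sample times, and you bound inter-sample growth by boundedness of the sampled-data flow on a compact interval (finitely many maps $\Mm(k)$, and $\|\e^{\Am s}\|$ bounded for $s\in[0,\bar{k}h]$). Your route is more elementary and produces explicit constants; notably, it also sidesteps the one delicate point of the paper's argument, namely the first inequality of Eq.~\eqref{eq:t<T}, which bounds $V(t)$ from above by $M\e^{-2b(t-T)}V(T)$ with $t<T$ --- an estimate that does not follow from forward GES (forward GES bounds $V(T)$ by $V(t)$, not the reverse, and a reverse bound could even fail if some $\Mm(k)$ is singular); your inter-sample growth bound is precisely the kind of argument that repairs this. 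What the paper's route buys in exchange is brevity, since Assumption~\ref{assum:stable} already hands it the needed constants. Two small polish items: in your Step~2 the displayed inequality holds at the last sample time not exceeding $t$, not at $t$ itself (your Step~3 accounts for this, but the formula as written overstates Step~2); and your closing remark about a formal converse is not needed, since the proposition --- like the paper's proof --- only certifies $b^*$ as a valid decay-rate estimate in the sense of the paper's GES definition.
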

	\begin{proof}
		Take an initial state $\xv\in\R^{\nx}$, its related sequence $\sigma  \in \Xs^{\PETC}$, and set $T = h\sum_{k_i\in\sigma}(k_i)$. From Def.~\ref{def:petcrel}, $V(T) = V(\xiv_{\xv}(T)) \leq rV_0 = \e^{\log(r)}V(0) = \e^{-2b^*T}V(0).$ 
		From GES of the PETC (Assumption \ref{assum:stable}), $V(t) \leq M\e^{-2bt}V(0)$ for some $b > 0$ and $M < \infty$. Consider two cases. 
		
		{\bf Case 1:} $t<T$. Combining the inequalities above gives
		\begin{multline}\label{eq:t<T} 
		V(t) \leq M\e^{-2b(t-T)}V(T) \leq \e^{-2b^*T}M\e^{-2b(t-T)}V(0) \\ \leq M\e^{2bT}\e^{-2b^*T}V(0) \leq M'\e^{-2b^*t}V(0),
		\end{multline}
		where $M' \coloneqq M\e^{2bT^*} \geq M\e^{2bT}$. 
		
		{\bf Case 2:} $t>T$; then we can partition the trajectory $\xiv_{\xv}(t)$ in intervals $[0, t_{m_1}],$  $[t_{m_1}, t_{m_2}],$  $...,$ $[t_{m_n}, t]$ satisfying $V(t_{m_i}) \leq rV(t_{m_{i-1}})$ and $V(t_{m_i})  > rV(t_{m_{i-1}-1}).$ Each interval but the last is associated with a sequence $\sigma_i \in \Xs^{\PETC}$, and therefore its duration is $T_i \leq T^*$. Thus, with $t' = t - t_{m_n},$
		\begin{multline*}
		V(t) = V\left(\textstyle\sum_{i=1}^nT_i + t'\right) \leq r^nV(t') \\ \stackrel{{\text{Eq.~\eqref{eq:t<T}}}}{\leq} r^nM'\e^{-2b^*t'}V(0)
		= M'\e^{-2b^*\!nT^*}\e^{-2b^*t'}V(0) \\
		= M'\e^{-2b^*(nT^*+t')}V(0) \leq M'\e^{-2b^*\!t}V(0).
		\end{multline*}
		
		In the two cases, we have $V(t) \leq M'\!\e^{-2b^*\!t}V(0)$, which implies $$|\xiv(t)| \leq \sqrt{M'\lambda_{\max}(\Pm)/\lambda_{\min}(\Pm)}\e^{-b^*\!t}|\xiv(0)|.$$
	\end{proof}
	
	We conjecture that Proposition \ref{prop:ges} provides a better estimate of the convergence rate of System \eqref{eq:plant}--\eqref{eq:quadtrig} than what can be obtained by, e.g., the theorems in \cite{heemels2013periodic}. The reason behind this conjecture is that, as $N\to\infty$ (or $r\to 0$), our bound approaches what would be the joint spectral radius of the associated discrete-time system (see, e.g.,~\cite{ahmadi2014joint}).

	\section{Numerical results}\label{sec:num}
	
	Consider a plant and controller of the form \eqref{eq:plant} from \cite{tabuada2007event}, and the Lyapunov function $V(\xv) = \xv\tran\Pl\xv$ such that the continuous-time closed-loop system satisfies $\d V(\xiv(t))/\d t = -\xiv(t)\tran\Ql\xiv(t)$, determined by the following matrices:
	\begin{gather*} \Am = \begin{bmatrix}0 & 1 \\ -2 & 3\end{bmatrix}, \ \Bm = \begin{bmatrix}0 \\ 1\end{bmatrix}, \ \Km = \begin{bmatrix}1 & -4\end{bmatrix}, \\
	\Pl = \begin{bmatrix}1 & 0.25 \\ 0.25 & 1\end{bmatrix}, \ \Ql = \begin{bmatrix}0.5 & 0.25 \\ 0.25 & 1.5\end{bmatrix}.
	\end{gather*}
    For the PETC implementation, we use a predictive Lyapunov-based triggering condition of the form
	$ V(\zetav(t)) > -\rho \zetav(t)\tran\Ql\zetav(t), $
	where $\zetav(t) \coloneqq \Am_\d(1)\xiv(t) + \Bm_\d(1)\Km\hat{\xiv}(t)$ is the next-sample prediction of the state and $0 < \rho < 1$ is the triggering parameter, here set to $\rho=0.8$. Setting $h = 0.1$ and $\bar{k} = 6$, we put it in the form \eqref{eq:quadtrig}, and obtained $a=0.952$ using LMIs based on Remark \ref{rem:a_computation}. %
	For the periodic \rev{phase}, the maximum period that verifies Assumption \ref{assum:periodicstable} is $h_{\periodic} = 0.4$ (with resolution of 0.01). Finally, we verified that $\underline{k}=1$ and set $r=0.1$.
	\begin{figure}
		\begin{center}
			\input{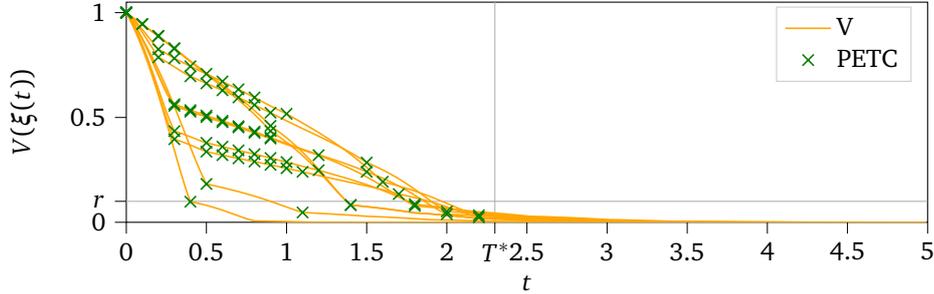}
			\vspace{-2em}
			\caption{\label{fig:lyap} {Trajectory of the Lyapunov function for 10 different initial conditions under MPETC, with PETC samples marked. The maximum time $T^*$ it takes to reach $\Xs_{\periodic} = r\Xs_0$ is highlighted.} }
			\vspace{-1.5em}
		\end{center}
	\end{figure}
	Lemma \ref{lem:finite} gives $N=47$ and a worst-case value of $8.5\cdot10^{32}$ bisimulation states. We computed the bisimilar state set $\Xs^{\Bisim}$ implementing a recursive algorithm (as discussed in Remark \ref{rem:properalgorithm}), obtaining a total of 219 states, out of which 109 belong to $\Xs^{\PETC}$. The Python implementation, using Z3 to solve Eq.~\eqref{eq:sequence}, took 21 min to generate these state sets. 
	
	Using MPETC, the maximum time it takes for $\Xs_{\periodic}$ to be reached is $$T^* = h\max_{\sigma\in\Xs^{\Bisim}}(\sum_{k_i\in\sigma}(k_i)) = 2.3.$$ This is highlighted in Fig.~\ref{fig:lyap}, which shows simulations from 10 different initial conditions.
	For PETC, applying Proposition \ref{prop:ges} gives $b^* = 0.5,$ while the best GES rate that can be obtained using the LMI approaches from \cite{heemels2013periodic} is $b = 0.23$, using Theorem III.4. For the average PETC sampling frequency, Proposition \ref{prop:freq} gives $f^* = 20/3$ (compared to $1/h = 10$), corresponding to the sequence $\sigma = (4,1,1,1,1,1).$
	
	\section{Conclusions}
	
	We have presented a practical alternative to ETC, the MPETC, \rev{which provides the benefits of PETC during transients and the traffic predictability of periodic sampling when close to steady state. Furthermore, we have presented a method to compute a bisimilar traffic model for MPETC.} In addition, we have presented some verification applications of the (bi)similar models that can be used for both PETC and MPETC. This is an important step towards understanding traffic characteristics of ETC, and it may support its applicability in real NCSs, since the traffic benefits are among the main motivations for the usage of ETC. Future work shall focus on expanding these models to scheduling \cite{mazo2018abstracted, gleizer2020scalable} and other triggering conditions, as well as using efficient relaxations to solve the satisfiability of Eq.~\eqref{eq:sequence}, such as $\delta$-SMT \cite{gao2013dreal} and semi-definite relaxations.
	\bibliographystyle{ieeetr} 
	\bibliography{mybib} 
		
\end{document}